\pdfoutput=1 %
\documentclass{easychair}

\usepackage{doc}
\usepackage{framed}
\usepackage{booktabs}
\usepackage{adjustbox}
\usepackage{amsmath}
\usepackage{amsfonts}
\usepackage{amsthm}
\usepackage{xspace}
\newtheorem{definition}{Definition}
\newtheorem{proposition}{Proposition}
\newtheorem{theorem}{Theorem}
\newtheorem{lemma}{Lemma}

\newtheorem{fact}{Fact}
\makeatletter
\newtheorem*{rep@theorem}{\rep@title}
\newcommand{\newreptheorem}[2]{%
  \newenvironment{rep#1}[1]{%
    \def\rep@title{#2~\ref{##1}}%
    \begin{rep@theorem}}%
    {\end{rep@theorem}}}
\makeatother
\newreptheorem{theorem}{Theorem}
\newreptheorem{proposition}{Proposition}
\newcommand{\Func}[1][\relax]{\ensuremath{\mathcal{F}_\mathsf{#1}}\xspace}

\usepackage{algorithm}
\usepackage[noend]{algpseudocode}
\usepackage{mathtools}
\usepackage{fancyvrb}
\usepackage{lineno}
\usepackage{xcolor}
\usepackage{pgfplots}
\usepackage{subcaption}
\usepackage[dvipsnames]{xcolor} 
\usepgfplotslibrary{groupplots}
\newcommand{\F}{\mathbb{F}}

\usepackage[normalem]{ulem} %

\newenvironment{nffunc}[1]{%
 \small
  \begin{framed}
  \begin{minipage}{\linewidth}
\vspace{-3pt}
       {\begin{center}\underline{\textbf{Functionality} #1}\end{center}}
}{%
\vspace{-3pt}
\end{minipage}
  \end{framed}
}

\newenvironment{nfprot}[1]{%
  \small
  \begin{framed}
    \begin{minipage}{\linewidth}
\vspace{-3pt}
      {\begin{center}\underline{\textbf{Protocol} #1}\end{center}}
}{%
\vspace{-3pt}
    \end{minipage}
  \end{framed}
}

\title{Scaling Zero Knowledge UNSAT Verification via Normalized Chaining}

\author{
    Ashwin Karthikeyan \inst{1}\thanks{These authors contributed equally to this work.}
\and
    Ethan Kharitonov\inst{2}\footnotemark[1]
\and
    Kuldeep S. Meel\inst{1,3}
\and 
    Anwar Hithnawi\inst{1}
}

\institute{
  University of Toronto, Canada\\
  \email{ashwin@cs.toronto.edu, meel@cs.toronto.edu, ahithnawi@cs.toronto.edu}
\and
   Independent, Canada\\
   \email{ethan.kharitonov@gmail.com}\\
\and
   Georgia Institute of Technology, USA\\
   \email{meel@gatech.edu}
 }

\authorrunning{Karthikeyan, Kharitonov, Meel and Hithnawi}

\titlerunning{Scaling Zero Knowledge UNSAT Verification}

\begin{document}

\maketitle

\begin{abstract}

Proofs of UNSAT are a standard primitive in formal verification and software assurance. In many real-world settings, the proof itself encodes proprietary or security-sensitive information, making public disclosure undesirable. Zero-knowledge certification of UNSAT addresses this tension: it enables a prover to convince a verifier that no satisfying assignment exists, without revealing anything about the underlying proof beyond its validity. Luo et al. recently introduced \textsc{ZkUnsat}, a protocol that achieves this goal by proving the validity of a weakened resolution proof in zero knowledge. \textsc{ZkUnsat} demonstrates the feasibility of zero-knowledge certification; however, its scalability to larger, real-world instances is constrained by substantial prover memory overhead, limiting its real-world applicability.
Motivated by advances in UNSAT proof formats such as LRAT, which enable efficient plain-text verification, we present a preprocessing technique that improves the efficiency of \textsc{ZkUnsat} without introducing additional leakage. Our approach normalizes the proof so that each derived clause is justified by a resolution chain of fixed public length $k$. This eliminates chain-length leakage and reduces prover memory usage. 
With $k = 16$, our method certifies roughly $62\%$ more instances than 
baseline \textsc{ZkUnsat} on the SAT 2002 competition benchmarks. Furthermore, 
for an equivalent number of certified instances, the memory footprint drops to 
under $25\%$ of that required by the baseline.

\end{abstract}

\section{Introduction}

As SAT solvers are increasingly deployed across diverse domains---from verifying safety-critical systems \cite{BCC+99} to 
solving long-standing mathematical conjectures \cite{HKM16, MLV+03}---the question of 
trust has become paramount. Modern SAT solvers are remarkably effective, 
capable of solving instances with millions of variables and clauses that 
would have been intractable mere decades ago \cite{K15}. Yet this effectiveness 
comes with a caveat: state-of-the-art solvers are large, highly optimized 
systems comprising hundreds of thousands of lines of code, making them 
susceptible to subtle implementation bugs \cite{BLB10}. 

To address these concerns, the SAT community has made substantial progress over the past two decades in proof-logging techniques. Early work centered on independently checkable resolution proofs \cite{ZM03}, which, while theoretically sound and complete, imposed significant overhead on both proof generation and verification. The introduction of the RUP (Reverse Unit Propagation) \cite{GN03,G08} proof format marked a turning point; it reduced the burden on solvers by allowing them to emit proofs without explicit resolution chains, and DRUP (Deletion Reverse Unit Propagation) \cite{HHW13DRUP} extended this approach with clause deletion information, making proofs efficiently checkable. More recently, LRAT (Linear Resolution Asymmetric Tautology)~\cite{CHH+17} pushed 
the boundary further by enabling more efficient, formally verified proof checking. 
These advances, coupled with highly optimized proof checkers~\cite{L24} and proof 
trimming tools such as \texttt{drat-trim}~\cite{WHH14} and 
\texttt{lrat-trim}~\cite{PFB23}, have made it practical for solvers to emit 
independently checkable certificates of unsatisfiability. This ecosystem of proof formats and checkers has established trust in solver outputs across diverse applications.

However, proof verification requires complete disclosure of the underlying proof, 
exposing the very witness the prover may wish to protect. This barrier 
increasingly prevents adoption in privacy-sensitive settings. Consider a proprietary SAT solver whose competitive advantage lies 
in carefully tuned heuristics---the proof itself may reveal the decision 
sequence and branching patterns that encode years of engineering effort. In 
industrial verification workflows, the unsatisfiability proof of a formula 
encoding a system's specification might leak sensitive architectural details 
or security properties that the vendor wishes to protect.

Zero-knowledge (ZK) proofs offer an elegant resolution to this tension. In a 
zero-knowledge proof system, a prover convinces a verifier of a statement's 
truth without revealing any information about the witness beyond its 
existence. The recent work of Luo et al.~\cite{LAH+22} introduced \textsc{ZkUnsat}, a 
protocol that achieves zero-knowledge certification of Boolean formula 
unsatisfiability. While it does not match the efficiency of plaintext proof 
checkers, \textsc{ZkUnsat} represents the first zero-knowledge \textsc{Unsat} 
protocol with practical viability, in contrast to earlier constructions that 
remained largely theoretical~\cite{BGG+90, LFKN92}. Its key insight is to 
prove the validity of a \textit{weakened resolution proof} in zero knowledge, 
using a polynomial commitment scheme and a modified zero-knowledge 
RAM~\cite{FKL+21} to hide the proof's contents while establishing its 
correctness. Despite this important feasibility demonstration, 
\textsc{ZkUnsat}'s scalability to real-world instances remains severely 
constrained: memory exhaustion has been identified as the primary 
limitation~\cite{LAH+22}, a finding we independently confirm on the SAT 2002 
competition benchmarks, where \textsc{ZkUnsat} exhausts available memory on 
the majority of instances (see Section \ref{sec:mem-bottleneck-from-unfold}). This gap underscores the need to reduce the 
overhead of zero-knowledge proof checking to a level practical for real-world 
use.

The central contribution of this work is to substantially improve the 
scalability of zero-knowledge \textsc{Unsat} certification, bringing it closer to practical deployment. To this end, we conduct a systematic 
analysis to identify the bottleneck constraining \textsc{ZkUnsat}'s 
performance. \textsc{ZkUnsat} expects proofs in which each proof line 
represents a single resolution step---every derived clause must be produced 
from exactly two parent clauses. To satisfy this requirement, proofs produced 
by modern SAT solvers must be \textit{fully unfolded}: a chained resolution 
step of length $m$ is decomposed into $m-1$ intermediate resolution steps, 
each introducing an intermediate clause. Our investigation reveals that this 
unfolding is the dominant source of memory overhead. Each intermediate clause 
requires its own index and cryptographic commitment in the zero-knowledge RAM, 
causing the prover's memory footprint to scale with the fully unfolded proof 
size rather than the original proof structure.

To quantify this overhead, we introduce a variant that verifies chained proofs 
directly without unfolding, which we call \mbox{Var-Chain-\textsc{ZkUnsat}}. On 510 
proofs from the SAT 2002 competition benchmarks, \textsc{ZkUnsat} successfully 
certifies 220 instances before exhausting the 32 GB memory limit, with 290 
instances aborted due to memory exhaustion. In contrast, 
\mbox{Var-Chain-\textsc{ZkUnsat}} certifies 281 instances, with only 103 
memory-related failures and peak memory usage remaining below 15 GB on most 
instances. This reduction confirms that unfolding is the primary bottleneck. 
Furthermore, \textsc{ZkUnsat} exhibits a tight correlation between runtime and 
peak memory---both scale together---whereas \mbox{Var-Chain-\textsc{ZkUnsat}} 
shows no such pattern, with memory usage entirely decoupled from execution 
time. However, bypassing 
unfolding introduces a critical issue: 
\mbox{Var-Chain-\textsc{ZkUnsat}} reveals the length of each resolution chain 
to the verifier, directly violating zero-knowledge guarantees.
Our key insight is that 
weakened resolution chains can be \textit{normalized} to a fixed public length 
$k$ while preserving zero-knowledge: the verifier learns only the number of 
lines in the normalized proof and the public parameter $k$, with no 
information about the structure of the underlying proof.

To achieve this, we present a preprocessing algorithm that transforms any chained 
resolution proof into one where every chain has length exactly $k$. The 
algorithm operates by padding shorter chains (exploiting the fact that a 
clause can be weakly resolved with itself) and decomposing longer chains into 
sequences of $k$-length chains. This normalization preserves zero-knowledge: 
the verifier learns only the number of proof lines and the public parameter 
$k$, with no information about the original chain-length distribution. Yet it 
captures the memory benefits of chaining by eliminating the intermediate 
clauses from unfolding. With $k=16$, our method---which we call 
$k$-Chain-\textsc{ZkUnsat}---reduces average peak prover memory and enables 
certification of approximately 62\% more instances compared to baseline 
\textsc{ZkUnsat} on the SAT 2002 benchmarks. Critically, this improvement shifts the performance bottleneck from memory constraints to computational runtime, 
substantially expanding the range of instances amenable to zero-knowledge 
verification. Beyond improving  scalability, we make a methodological contribution by 
developing an accurate estimator for the peak memory usage of the
certification protocol.

\textbf{Organization}: The remainder of this paper is organized as follows. Section~\ref{sec:prelims} introduces the necessary preliminaries, covering the logical foundations of resolution-based refutation and the cryptographic primitives underlying zero-knowledge certification. Section~\ref{sec:background} provides background on proof formats, and the \textsc{ZkUnsat} 
construction. Section~\ref{sec:observations} presents our analysis identifying unfolding as the 
primary bottleneck and introduces the normalization algorithm. Section~\ref{sec:Thm-and-privacy-preservation} establishes the zero-knowledge guarantees of the normalized protocol and derives a peak memory estimator. Section~\ref{sec:evaluation} 
details our experimental evaluation on the SAT 2002 benchmarks, demonstrating 
both the performance improvements and the accuracy of our estimators. We 
conclude in Section~\ref{sec:conclusion} with directions for future work.

\section{Preliminaries}\label{sec:prelims}

\subsection{Logic}

\paragraph{Boolean formulas.} Let $V$ be a finite set of Boolean variables 
and $\mathbb{B} = \{\top, \bot\}$. A Boolean formula $\varphi: \mathbb{B}^{V} 
\longrightarrow \mathbb{B}$ is \textit{satisfiable} (SAT) if there exists an 
assignment $\sigma \in \mathbb{B}^{V}$ such that $\varphi(\sigma) = \top$, 
and \textit{unsatisfiable} (UNSAT) otherwise. A literal is a variable $x \in 
V$ or its negation $\neg x$. A clause is a disjunction of literals $\ell_{1} 
\lor \dots \lor \ell_{n}$. A formula in conjunctive normal form (CNF) is a 
conjunction of clauses $\varphi = C_{1} \land \dots \land C_{m}$. We identify 
clauses with their sets of literals and formulas with their sets of clauses, 
writing $\ell \in C$ and $C \in \varphi$ accordingly.

\paragraph{Resolution proofs.} A resolution proof of unsatisfiability 
for a CNF formula $\varphi$ is a sequence of clauses terminating in the empty 
clause $\bot$, where each clause is either from $\varphi$ or derived by 
resolving two previous clauses. Given clauses $C = \ell_{1} \lor \dots \lor 
\ell_{n} \lor x$ and $C' = \ell'_{1} \lor \dots \lor \ell'_{m} \lor \neg x$ 
for some variable $x \in V$, their resolvent on $x$ is $C_{\text{res}} = 
\ell_{1} \lor \dots \lor \ell_{n} \lor \ell'_{1} \lor \dots \lor \ell'_{m}$.

A \textit{chained resolution} step of length $k$ specifies clauses $C_1, 
\dots, C_{k+1}$ and pivot variables $x_1, \dots, x_k$. The chain is evaluated 
left-to-right: $C_1$ resolves with $C_2$ on $x_1$ to produce an intermediate 
clause, which then resolves with $C_3$ on $x_2$, and so forth until the final 
resolvent is obtained. Modern proof formats such as DRAT \cite{WHH14} and LRAT 
\cite{CHH+17} employ chained resolution to reduce proof size by avoiding explicit 
representation of intermediate clauses.

\paragraph{Weakened resolution.} A clause $D'$ is a weakening of clause $D$ if $D' = D \lor S$ for some clause $S$. A clause $C_{\text{wres}}$ is called a \textit{weakened resolvent} of clauses $C_1$ and $C_2$ if it is the resolvent of the weakenings $C_1'$ and $C_2'$ of $C_1$ and $C_2$ respectively.
Ordinary resolution can be seen as a special case of weakened resolution where the weakenings of the premise clauses are $C_1' = C_1 \lor \bot$ and $C_2' = C_2 \lor \bot$. Weakened 
resolution proofs are sound and complete: they exist if and only if $\varphi$ 
is unsatisfiable \cite{LAH+22}. We write $C_1 \odot_{x} C_2$ to denote a weakened 
resolvent of $C_1$ and $C_2$ on variable $x$, treating the operator as left 
associative for chains: $C_{1} \odot_{x_{1}} C_{2} \odot_{x_{2}} C_{3} = 
(C_{1} \odot_{x_{1}} C_{2}) \odot_{x_{2}} C_{3}$.

\subsection{Cryptography}\label{sec:prelims-crypt}
We recall the necessary cryptographic notions, following \cite{ALP26, BG92, GoldreichFoC06, GMR85, HL10, LAH+22,thaler2022proofs, WYKW21}, and provide the formal definitions for the convenience of the reader in Appendix \ref{sec:appendix:formal-defs} following \cite{BG92, GoldreichFoC06, HL10}.

\paragraph{Interactive proofs.} An interactive proof system for a language $L$ is a protocol between two parties: a computationally unbounded prover $P$ and a probabilistic polynomial-time (PPT) verifier $V$, who exchange messages over multiple rounds. The system satisfies two properties: \emph{completeness}: for every $x \in L$, a verifier $V$ accepts with high probability; and \emph{soundness}: for every $x \notin L$ and every (possibly
cheating) prover, $V$ accepts with low probability. We refer the reader to Appendix \ref{sec:appendix:formal-def:IP}
for the formal definition.

\paragraph{Proofs of knowledge.} Conceptually, a \textit{proof of knowledge} is a protocol between two probabilistic polynomial time (PPT) parties, a prover $\mathcal{P}$ and a verifier $V$, where $\mathcal{P}$ convinces $V$ that for a given common input $\psi$ and relation $R \subseteq \{0,1\}^* \times \{0,1\}^*$, $P$ \textit{knows} a witness $w$ such that $(\psi, w) \in R$. A formal definition is included in Appendix \ref{sec:appendix:formal-def:PoK}.

\paragraph{Zero knowledge proofs.} A zero-knowledge proof system is an interactive proof with the additional property that the verifier learns nothing beyond the validity of the statement being proved. More formally, an interactive protocol between $P$ and $V$ is \emph{perfect zero-knowledge with respect to auxiliary input} if for every PPT verifier $V^*$, there exists a PPT machine $M^*$ (called the simulator) such that $\{M^*(x,z)\}_{x \in L,\, z \in \{0,1\}^*}$ (i.e., the output of machine $M^*$ on common input $x$ and auxiliary input $z$) is \emph{identically distributed} to $\{\langle P(y), V^*(z) \rangle(x)\}_{x \in L,\, z \in \{0,1\}^*}$ (i.e., the output of $V^*$ after it interacts with $P$ on common input $x$ and auxiliary inputs $y$ and $z$), where $y$ is the prover's private input.\footnote{As is standard, the simulator is permitted to output a special failure symbol $\bot$ with probability at most $\tfrac{1}{2}$, and the two distributions are required to agree conditioned on $M^*(x,z) \neq \bot$.} Intuitively, everything the verifier could compute from the interaction can be computed from the public input alone. The formal definition appears in Appendix \ref{sec:appendix:formal-def:ZKP}.

\paragraph{Commit-and-prove zero knowledge proofs.} Zero knowledge proofs can be constructed in multiple ways. \textsc{ZkUnsat} is constructed in the \textit{commit-and-prove} paradigm. A commit-and-prove zero knowledge proof proceeds in two phases. The first is a \textit{commitment phase} where given a field element $m$ in some finite field, the prover commits to it via a cryptographic commitment, denoted as $[m]$, while maintaining two key properties: \textit{(1) Hiding:} $[m]$ does not reveal $m$, and \textit{(2) Binding:} Except with negligible probability, $[m]$ cannot be the commitment of another message $m'$. Therefore, once a value $m$ is committed, it cannot be altered to be the commitment of another value. The second phase is the \textit{proving phase} where the prover proves a relation about the committed values.

\paragraph{Polynomial commitments.} Fix a large finite field $\F$ and public degree bound $d$, and let $\F_d[X]$ denote the polynomials over $\F$ of degree at most $d \in \mathbb{Z}^+$. A polynomial $p(X) \in \F_d[X]$ can be committed by committing to the $d+1$ coefficients of the polynomial. \textsc{ZkUnsat} instantiates commitments using a scheme based on Vector Oblivious Linear Evaluation (VOLE) \cite{YSWW21} that enables proving relations between these polynomials via their commitments. We denote this commitment of $p(X)$ by $[p(X)]$. In particular, given $p(X), q(X), r(X) \in \F_d[X]$, one can perform an equality check to prove the polynomial underlying $[p(X)]$ is equal to the polynomial underlying $[q(X)]$, and a factor check to prove $[r(X)]$ is a commitment of the polynomial $p(X) \cdot q(X)$. Given that resolution steps can be performed by treating clauses as sets (for example, $C=(l_1, \dots, l_n)$ can be treated as the set $\{l_1, \dots, l_n\}$), encoding clauses as polynomials, where a root of the polynomial corresponds to a literal in the clause, enables reasoning about clauses in a hiding and binding manner. Hereafter, we refer to the polynomial commitment scheme instantiated in \textsc{ZkUnsat} as the VOLE-based PCS.

\section{Related Work}\label{sec:background}

To provide a broader context for our work, we first briefly discuss the evolution of certification in the SAT community and then focus on recent efforts on zero-knowledge certification. 

\subsection{UNSAT Proof Formats}
The development of UNSAT proof formats has been a sustained effort in 
the SAT community~\cite{GN03,G08,WHH14,CHH+17,BCH22}, driven by two
complementary goals: minimizing proof-generation overhead on solvers, 
and enabling checkers to verify certificates efficiently.
Resolution proofs~\cite{ZM03} were the first independently checkable
certificates for SAT solvers. RUP (Reverse Unit Propagation)~\cite{GN03,G08} simplified proof generation
by allowing solvers to emit derived clauses without explicit resolution
chains, delegating justification reconstruction to the checker via unit
propagation.
The RAT (Resolution Asymmetric Tautology) property~\cite{JHB12} extended
this to clauses not logically implied by the formula, and the \texttt{RAT}
format~\cite{HHW13} enabled solvers to emit and justify such clauses.
DRAT~\cite{WHH14} augmented RAT with clause deletion, allowing solvers 
to signal when learned clauses are no longer needed; the accompanying
DRAT-trim tool reduced proof size by eliminating clauses irrelevant to 
the final refutation.
GRIT~\cite{CMS17} introduced efficient certified resolution proof checking by providing clause indices as hints, and allowing clause deletion, enabling verification in time proportional to the hint annotations rather than requiring expensive unit clause searches. Finally, LRAT \cite{CHH+17} built on GRIT's hint-based approach, extending it to the RAT proof system and enabling an efficient formally verified proof checker, implemented in ACL2.  These combined efforts have led to faster LRAT checking than solving \cite{PFB23} with CaDiCaL \cite{BFF+24}, striking a balance between efficient proof generation by the SAT solver and efficient verification by the proof checker. The striking improvements in proof checking are a core motivation of our work as we seek to achieve similar improvements for proof checking with zero-knowledge. 

\subsection{Zero-Knowledge UNSAT Certification}
Building on these advances in efficient plaintext checking, recent work seeks to extend comparable certification guarantees to privacy-sensitive settings via zero-knowledge verification of UNSAT. \textsc{ZkUnsat} \cite{LAH+22} established the first practically viable protocol of this kind, proving the validity of a weakened resolution proof in zero knowledge. More recently, ZK-ProVer \cite{KWLL26} proposed a non-interactive alternative based on Scalable Transparent ARguments of Knowledge (zk-STARKs) \cite{BBHR19}. While ZK-ProVer substantially reduces verifier time and communication, it does so at a steep cost in prover memory: across their benchmarks, \textsc{ZkUnsat} exhibits markedly flatter and lower memory utilization, in some configurations as much as $10\times$ lower than the ZK-ProVer construction. Since prover memory is the dominant scalability bottleneck we target, \textsc{ZkUnsat} remains the natural baseline for zero-knowledge UNSAT certification.

\subsubsection{ZKUNSAT}

Let $\varphi$ be a CNF formula and $\pi$ a chained weakened resolution
proof of its unsatisfiability.
\textsc{ZkUnsat}~\cite{LAH+22} is an interactive
proof protocol in which a prover $\mathcal{P}$ holding $\pi$ convinces
a verifier $\mathcal{V}$ that $\varphi$ is unsatisfiable.
When $\pi$ is fully unfolded prior to execution, the protocol is
zero-knowledge: $\mathcal{V}$ learns nothing beyond the existence of a
weakened resolution proof, the length of the unfolded proof, and the size of the largest clause appearing in any step of the proof.

\textsc{ZkUnsat} encodes clauses as polynomials over a finite field
$\mathbb{F}$.
Luo et al.\ construct an encoding $\phi$ of clauses into the set of polynomials
$\mathbb{F}_{d}[X]$ where $d \in \mathbb{Z}_{+}$ is the size of the largest clause appearing in the proof. The encoding is such that verifying $C_{\text{wres}} = C \odot_{x} C'$
reduces to checking algebraic identities involving $\phi(C)$, $\phi(C')$,
$\phi(x)$, and two witness polynomials $w, w'$.
We write $[C] := [\phi(C)]$ to denote the commitment of the polynomial encoding of the clause $C$.
We refer the reader to~\cite{LAH+22} for a complete treatment of $\phi$.

\textsc{ZkUnsat} is analyzed in the Universal Composability (UC) framework \cite{Canetti2000, Canetti20, CLOS02, HL10}, where each cryptographic primitive is modeled as an \emph{ideal functionality}: a trusted third party that captures the primitive's intended security guarantees. The protocol's security is established by showing that a real execution is indistinguishable from one in which these ideal functionalities are present. \textsc{ZkUnsat} relies on three such functionalities, whose formal specifications appear in Appendix \ref{sec:appendix:functionalities}:
\begin{itemize}
    \item \Func[ZK] handles zero-knowledge commitments and polynomial checks. It allows the prover to commit secret field elements (via $\sf{Witness}$), both parties to authenticate shared values (via $\sf{Instance}$), and both parties to verify circuit relations and product-of-polynomial equality checks over committed values.
    \item \Func[Clause] provides clause-level operations in zero knowledge. It supports committing a clause with a width bound (via $\sf{Input}$), checking equality of two committed clauses (via $\sf{Equal}$), verifying that a committed clause is a weakened resolvent of two others (via $\sf{X-Res}$), and asserting that a committed clause is the empty clause (via $\sf{IsFalse}$).
    \item \Func[FlexZKArray] implements a zero-knowledge random-access array, hereafter referred to as the \emph{ZkRam}\cite{FKL+21}. It is initialized with a sequence of committed values and supports private-index reads: the prover retrieves an entry at a secret index while the verifier learns only an upper bound on that index. A deferred $\sf{check}$ operation verifies the consistency of all preceding accesses, detecting any attempt by the prover to read a value that differs from the one originally stored.
\end{itemize}

The protocol is built on the clause encoding $\phi$, the VOLE-based PCS, and the ZkRam $\mathcal{R}$, and follows the commit-and-prove paradigm (Section \ref{sec:prelims-crypt}).
 
\paragraph{Commit phase.}
The $|\varphi|$ clauses of the input formula are committed through $\Func[Clause]$ and the $d+1$ coefficients of each clause are confirmed to represent the clauses in $\varphi$ using the $\sf{Instance}$ operation from $\Func[ZK]$ to authenticate the coefficients. The $|\pi|$ derived clauses in the fully unfolded proof, known only to $\mathcal{P}$, are committed via $\Func[Clause]$'s $\sf Input$ operation as degree $d$ polynomials. The \textsc{ZkRam} $R$ then initializes all $|\varphi| + |\pi|$ commitments through $\Func[FlexZKArray]$.
 
\paragraph{Proving phase.} The protocol iterates over the $|\pi|$ steps of the refutation. In each iteration $i$, the prover privately retrieves two premise clauses $C_{k_i}$ and $C_{l_i}$ from $\mathcal{R}$ via $\Func[FlexZKArray]$'s $\sf{Read}$ operation---so that the verifier learns only an upper bound on the accessed indices---and then both parties invoke $\Func[Clause]$'s $\sf{X\text{-}Res}$ operation to verify that the $i$-th derived clause $C_i$ is a valid weakened resolvent of $C_{k_i}$ and $C_{l_i}$. After all $|\pi|$ iterations, both parties call $\sf{IsFalse}$ on the final clause to confirm it equals $\bot$, and invoke $\Func[FlexZKArray]$'s deferred $\sf{check}$ to verify the consistency of all preceding array accesses. \\
 
\noindent Note that $|\pi|$ is revealed as it is the number of iterations in the
proving phase, and that the size $d$ of the largest clause is revealed
as part of the specification of the PCS.

\paragraph{Security.} The setting in which one or more subprotocols are replaced by their ideal functionalities $(\Func[1], \dots, \Func[m])$ is called a $(\Func[1], \dots, \Func[m])$-\textit{hybrid model}. The authors of \cite{LAH+22} prove that \textsc{ZkUnsat} is a zero knowledge proof of knowledge (in the $(\Func[ZK], \Func[Clause], \Func[FlexZKArray])$-hybrid model) in the full version \cite{LAH+22-ext}.

\paragraph{Scaling.} Recent work \cite{KLM+26} on proving quantified Boolean formulas in zero knowledge scales \textsc{ZkUnsat} by introducing a clause-partitioning scheme that buckets clauses of similar width and commits to them as lower-degree polynomials. While this optimization improves the runtime of \textsc{ZkUnsat} by approximately $50\%$, it accepts an explicit leakage-efficiency tradeoff: bucketing reveals structural information about the clause-width distribution. We do not adopt this optimization, as our goal is to reduce prover memory without introducing the structural leakage it incurs.

\section{Identifying and Addressing the Unfolding Bottleneck}\label{sec:observations}

\textsc{ZkUnsat} expects a weakened resolution proof in which each proof line encodes a single resolution step. To meet this requirement, proofs produced by SAT solvers must be fully unfolded: every derived clause is produced from exactly two parent clauses, substantially inflating proof size. In this section, we show that this inflation is the dominant scalability bottleneck in \textsc{ZkUnsat}, and present a preprocessing algorithm that normalizes each resolution chain to a fixed public length k -- capturing the memory benefits of chaining without introducing additional leakage.

\subsection{The Memory Bottleneck from Unfolding}\label{sec:mem-bottleneck-from-unfold}

Unfolding a chained resolution step of length $m$ introduces $m-1$ intermediate clauses. Each of these requires its own index and \textsc{ZkRam} commitment, so the prover's memory footprint grows with the fully unfolded proof size. To quantify this cost, we ran \textsc{ZkUnsat} directly on chained resolution proofs, bypassing unfolding entirely. We refer to this variant as Var-Chain-\textsc{ZkUnsat}. On the 510 proofs we obtain from the SAT 2002 benchmark suite (Section~\ref{sec:evaluation}), \textsc{ZkUnsat} verified 220 of 510 instances, with 290 aborted due to memory exhaustion at the 32 GB virtual memory limit within $5{,}000$ seconds. Var-Chain-\textsc{ZkUnsat} verified 281, with only 103 memory aborts and 126 timeouts with the same resource limits. As Figure~\ref{fig:rt_vs_mem_ZKUNSAT_vs_var-chain_ZKUNSAT} shows, \textsc{ZkUnsat}'s runtime and peak memory are tightly correlated, consistent with memory being the primary bottleneck. Var-Chain-\textsc{ZkUnsat} exhibits no such pattern, and its peak memory stays below 15 GB on most instances.
Var-Chain-\textsc{ZkUnsat}, however, reveals the length of each resolution chain to the verifier. In the next section, we show how to recover the memory benefits of chaining 
without this leakage.

\begin{figure}
    \centering
    \input{plots/runtime_vs_mem_ZKUNSAT_vs_var-chain_ZKUNSAT}
    \caption{Runtime vs.\ prover peak memory usage on the SAT 2002 benchmarks for Var-Chain-\textsc{ZkUnsat} and \textsc{ZkUnsat}.}
    \label{fig:rt_vs_mem_ZKUNSAT_vs_var-chain_ZKUNSAT}
\end{figure}

\subsection{Fixed-Length Chain Normalization}\label{sec:chain-normalization}

We present a preprocessing algorithm that converts an input chained resolution proof -- which we refer to as the \emph{parent proof} -- into a chained weakened resolution proof in which every chain has length exactly $k$, exploiting the surprisingly useful fact that the weakened resolution of a clause $C$ with itself on any pivot variable $x$ admits $C$ as a weakened resolvent. The algorithm processes each proof line incrementally while maintaining an index mapping, \texttt{map}, that takes a clause index from the old proof and provides its corresponding index in the new proof. If the hints on a line are $I_1, \dots, I_m$, it invokes Normalize($k$, $h_1, \dots, h_m$) (Alg.~\ref{alg:normalize}) where $h_i = \texttt{map}[I_i]$.

For a fixed $k$, we call this combined pipeline---normalizing a parent proof into one where every chain has length $k$, then running \textsc{ZkUnsat}---$k$-Chain-\textsc{ZkUnsat} (a formal description of the protocol is included in Appendix \ref{sec:appendix-protocol}). Note that 1-Chain-\textsc{ZkUnsat} is precisely \textsc{ZkUnsat} on fully unfolded proofs. We also note that given a parent proof with chain lengths $L_1, \dots, L_n$, the Normalize algorithm produces a weakened resolution proof containing precisely $\sum_{i=1}^n k \cdot \lceil L_i/k \rceil$  weakened resolution steps. Therefore, the public parameters of $k$-Chain-\textsc{ZkUnsat} are: the input formula $\varphi$, $k$, the number of chains in the normalized proof $|\pi|=\sum_{i=1}^n\lceil L_i/k \rceil$, and the degree $d$ of the polynomials used to represent the clauses in the proof. 

\begin{algorithm}[t]
	\caption{Normalize($k, (h_1, \dots,h_m)$)}
	\label{alg:normalize}
	\begin{algorithmic}[1]
        \If{$m \le k+1$}
            \State $C \gets$ Resolve($h_1,\dots,h_m$)
            \For{$i \in \{1,\dots,k+1-m\}$}
            \State $h_{1,i} \gets h_1$
            \EndFor
            \State \label{padding}\Call{Print}{$C,(h_{1,1},\dots,h_{1,(k+1-m)},h_1,\dots,h_m)$}
        \Else \Comment{$m > k+1$}
            \State $C \gets $ Resolve($h_1, \dots ,h_{k+1}$)
            \State\Call{Print}{$C,(h_1,\dots,h_{k+1})$}
            \State \label{chop}Normalize($k$, (Index($C$), $h_{k+2}, \dots, h_m$))
        \EndIf
	\end{algorithmic}    
\end{algorithm}

\section{Security and Memory Analysis}\label{sec:Thm-and-privacy-preservation}

In this section, we state the central security guarantee of $k$-Chain-\textsc{ZkUnsat}, that it constitutes a zero-knowledge proof of knowledge of a normalized weakened chained resolution proof, with the accompanying proof given in Appendix \ref{sec:appendix:sec-proof}. We also identify two usage restrictions that must be observed to maintain this guarantee, namely, $k$ must be selected a priori, independent of the proof structure, and the protocol must be executed for only a single value of $k$ on any given parent proof. We then derive a peak memory estimator that allows practitioners to predict, before committing to a full execution, whether a given (instance, $k$) pair will fit within a memory budget.

\subsection{Security Analysis}
The high-level proof structure of \cite[Theorem 1]{LAH+22-ext}---establishing that the \textsc{ZkUnsat} protocol constitutes a zero-knowledge proof of knowledge of a refutation---generalizes to $k$-Chain-\textsc{ZkUnsat}. We state this formally below and include a proof in Appendix \ref{sec:appendix:sec-proof} for the convenience of the reader.
 
\begin{theorem}\label{thm:k-chain-zk}
$k$-Chain-\textsc{ZkUnsat} is, against static corruption, a zero-knowledge proof of knowledge of a length-$k$ chained weakened resolution proof, with common input $\psi = (\varphi,\; k,\; |\pi|,\; d)$, where $\varphi$ is the input formula, $k$ the chain bound, $|\pi|$ the number of chains in the normalized proof, and $d$ the degree of the polynomials encoding the clauses in \textsc{ZkUnsat}'s $(\Func[ZK], \Func[Clause], \Func[FlexZKArray])$-hybrid model.
\end{theorem}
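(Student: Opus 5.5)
We follow the structure of \cite[Theorem 1]{LAH+22-ext} and prove the three required properties separately in the $(\Func[ZK], \Func[Clause], \Func[FlexZKArray])$-hybrid model: completeness, knowledge soundness via an extractor, and zero knowledge via a simulator. Two facts about the protocol make this work. First, the public input $\psi = (\varphi, k, |\pi|, d)$ fixes the entire control flow of $k$-Chain-\textsc{ZkUnsat}. There are $|\pi|$ iterations, each performing exactly $k+1$ $\sf{Read}$ calls and $k$ $\sf{X\text{-}Res}$ calls, with all clauses committed at degree $d$. Second, every message the verifier receives from a hybrid functionality is either a success or abort bit, or an upper bound on an index that depends only on the iteration number. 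We do not need any property of the Normalize algorithm beyond this.

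\emph{Completeness.} Suppose the prover holds a valid normalized proof. Each chain $C_{h_1} \odot_{x_1} \cdots \odot_{x_k} C_{h_{k+1}}$ is then a correct sequence of weakened resolutions. This includes the padded positions, where a clause $C \odot_x C$ admits $C$ as a weakened resolvent. Every $\sf{X\text{-}Res}$ call therefore succeeds, and every $\sf{Read}$ returns the value originally stored. The final $\sf{IsFalse}$ and the deferred $\sf{check}$ also pass.

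\emph{Proof of knowledge.} Let $\mathcal{P}^*$ be a possibly malicious prover. In the hybrid model the extractor plays the role of the functionalities, so it sees every value $\mathcal{P}^*$ sends to $\sf{Input}$ and $\sf{Witness}$, and every index $\mathcal{P}^*$ passes to $\sf{Read}$. From these it reads off all committed clauses, the hint indices, and the pivots, and outputs them as a chained proof. Now suppose $\mathcal{V}$ accepts. The deferred $\sf{check}$ guarantees that each read value equals the value stored at that index. Each $\sf{X\text{-}Res}$ acceptance guarantees one valid weakened resolution step. $\sf{IsFalse}$ guarantees that the last clause is $\bot$. Together these show the extracted object is a valid length-$k$ chained weakened resolution refutation of $\varphi$. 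There is one subtlety in this step. Inside a chain, the intermediate resolvents are not stored in the ZkRam; they are committed values passed from one $\sf{X\text{-}Res}$ call to the next. The argument must show that the output of step $j$ is tied to the input of step $j+1$, either by using the same commitment or by an $\sf{Equal}$ check. The extractor then recovers these intermediate clauses as well.

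\emph{Zero knowledge.} The simulator $\mathcal{S}$ receives only $\psi$ and runs the corrupted verifier $\mathcal{V}^*$. It emulates every functionality call in the order fixed by $\psi$. It answers each $\sf{Read}$ by revealing only the public index upper bound, returns success for every $\sf{X\text{-}Res}$, $\sf{IsFalse}$ and $\sf{check}$ call, and sends the public $\sf{Instance}$ coefficients of $\varphi$ honestly. Because in the real execution every call an honest prover makes also succeeds, and the functionalities reveal nothing else, the two views are identically distributed. This is the step where normalization matters and where we expect the main obstacle. We must argue carefully that the view does not depend on the original chain lengths $L_i$. Padding makes the number of $\sf{Read}$ and $\sf{X\text{-}Res}$ calls per chain exactly $k+1$ and $k$, the index bounds depend only on the position in the proof, and $|\pi|$ is public. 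We also state explicitly that the guarantee is relative to $\psi$. Since $|\pi|=\sum_i \lceil L_i/k\rceil$ is leaked, the theorem holds only if $k$ is fixed independently of the proof and only a single $k$ is used on a given parent proof. Otherwise several values of $|\pi|$ would be revealed for the same proof, and these are not simulatable from one $\psi$.

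Finally, we address static corruption. The two cases are a corrupted prover, handled by the extractor above, and a corrupted verifier, handled by the simulator above. Corruptions are fixed in advance, so no adaptive simulation is needed.
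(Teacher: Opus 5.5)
Your decomposition matches the paper's: a simulator that replays the call pattern fixed by $\psi$, and a straight-line extractor that reads the inputs to the emulated functionalities. However, the knowledge-soundness step as written does not go through.

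\textbf{The gap: the index bound.} You derive validity of the extracted refutation from three facts: read consistency, the \textsf{X-Res} verdicts, and \textsf{IsFalse}. These do not prevent a chain from reading its own result cell, or a later one. By Fact~\ref{fact:wres}, $\bot$ is a weakened resolvent of $\bot$ with itself on any pivot; this is the same fact the paper exploits for padding. A cheating prover can therefore commit $D_{|\varphi|+|\pi|-1}=\bot$ and let the last chain read only that index. Every read is then consistent, and every \textsf{X-Res} and the \textsf{IsFalse} call pass. So an arbitrary $\varphi$, even a satisfiable one, is ``refuted''.

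What excludes this is the second cheating condition of \textsf{Read} in \Func[FlexZKArray]: $f:=\textsf{cheating}$ whenever $\ell\ge t$, where the verifier supplies $t=|\varphi|+j-1$ for every premise read in chain $j$. This is exactly condition (ii) of Definition~\ref{def:rk}, $p_{j,i}<|\varphi|+j-1$, which makes the derivation well-founded. You mention these bounds only as public data for the simulator, but they are indispensable for soundness.

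For the same reason you must also state two further non-abort conditions. \textsf{Instance} pins cells $0,\dots,|\varphi|-1$ to the clauses of $\varphi$. The width argument of \textsf{Input} enforces the width-$d$ bound. The paper makes all four conditions explicit by fixing the NP relation $R_k$ and proving perfect UC-realization of $\mathcal{F}^{R_k}_{\mathsf{ZKPoK}}$. Without such a relation, your target object, a ``valid length-$k$ chained refutation'', is underspecified, which is how the missing condition slipped through.

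\textbf{Smaller points.}
\begin{itemize}
\item In the hybrid model \textsf{X-Res} receives only $([C_0],[C_1],[C_r])$, so the extractor cannot simply read off the pivots. It must recover them, as the paper does in Lemma~\ref{lem:pivot} by testing all $2|V|$ literals against Fact~\ref{fact:wres}.
\item The linkage subtlety you leave open is settled by the protocol itself. In iteration $i$, the handle $[C_i^{(j)}]$ output at step $j$ is reused as the first input of step $j+1$, and step $k$ outputs the stored cell $[C_{|\varphi|-1+i}]$. By monotonicity of Fact~\ref{fact:wres}, induction shows each least resolvent is contained in the corresponding committed clause. This gives $C_{i,k}\subseteq D_{|\varphi|+i-1}$ (in the notation of Definition~\ref{def:rk}) and the width bound on intermediate resolvents.
\item The restrictions on how $k$ is chosen are not hypotheses of the theorem. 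The witness is the normalized proof, and zero knowledge relative to $\psi$ holds unconditionally. Those caveats concern only what $\psi$ reveals about the parent proof, which is why the paper states them separately as usage restrictions.
\end{itemize}
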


We note two important usage restrictions regarding k-Chain-\textsc{ZkUnsat}. First, for any given parent proof there exists a $k$ such that the peak memory is minimal for that parent proof, and the runtime is minimal for that peak memory usage. This is primarily because the prover's peak memory scales with the total number of clauses committed within the \textsc{ZkRam}, which is minimal when $k \ge \max_i\{L_i\}$, and given this constraint on $k$, the $k$ that minimizes the number of operations to be proven/verified is $k = \max_i\{L_i\}$. Such values of $k$ that depend on the parent proof should not be used in practice, as their values may reveal information about the underlying $L_i$. We therefore recommend selecting $k$ \emph{a priori}, without knowledge of the proof structure, to prevent such leakage.

Second, one should not execute k-Chain-\textsc{ZkUnsat} for multiple values of $k$ on the same parent proof if the k-Chain proofs were produced by the Normalize algorithm, as doing so may leak information about the parent proof's chain length distribution. For instance, if k-Chain-\textsc{ZkUnsat} is executed for all $k \in \{1, \dots, p\}$, where $p$ is the total number of resolution steps as revealed by 1-Chain-\textsc{ZkUnsat}, this would reveal the number of chains of length $l$ for every $1 \le l \le p$, which is formalized in the following proposition (a proof is included in Appendix \ref{sec:appendix:sec-proof}).

\begin{proposition} \label{prop:main-paper}
    Given a parent proof, if the k-Chain-\textsc{ZkUnsat} protocol is executed for all $k \in \{1, \dots, p\}$ where $p$ is the total number of resolution steps in the parent proof, and the k-Chain weakened resolution proofs are derived using the Normalize algorithm, then the collection of all executions reveals the number of chains of length $l$ for every $1 \le l \le p$.
\end{proposition}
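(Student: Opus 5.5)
The plan is to show that the vector of public values $(|\pi_k|)_{k=1}^{p}$, one from each execution, determines the chain-length counts. Write $n_l$ for the number of chains of length $l$ in the parent proof, for $1 \le l \le p$. Since each chain has length at least $1$ and the lengths sum to $p$, every $L_i$ lies in $\{1,\dots,p\}$, and $\sum_l l\, n_l = p$. By the counting remark following Alg.~\ref{alg:normalize}, the normalized proof for parameter $k$ has exactly
\[
N_k := |\pi_k| = \sum_{i} \lceil L_i/k \rceil = \sum_{l=1}^{p} \lceil l/k \rceil\, n_l
\]
chains, and $N_k$ is part of the common input $\psi$ of Theorem~\ref{thm:k-chain-zk}. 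The claim therefore reduces to a linear-algebra fact: the map $(n_1,\dots,n_p)\mapsto(N_1,\dots,N_p)$ is injective, and its inverse is efficiently computable.

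The key step is to take the difference operator. Set $N_{p+1}:=\sum_l \lceil l/(p+1)\rceil n_l=\sum_l n_l$, which equals $N_p$ because $\lceil l/p\rceil = 1$ for every $l\le p$; it is therefore known without any extra execution. Define $c_k(l) = \lceil l/k\rceil - \lceil l/(k+1)\rceil$. I would not try to invert the matrix $A_{kl}=\lceil l/k\rceil$ by hand. Instead, I would argue triangularity: $\lceil l/k\rceil = 1$ for all $l \le k$, so $A_{kl}=1$ whenever $l\le k$. The step I expect to need the most care is the cleanest route, which is to order the unknowns by decreasing $l$. For $k=p$, $N_p=\sum_l n_l$. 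For $k = p-1$, $\lceil l/(p-1)\rceil = 2$ only when $l=p$ (for $p\ge 2$) and equals $1$ otherwise, so $N_{p-1}-N_p = n_p$. In general, $\lceil l/k\rceil \ge 2$ iff $l>k$. Hence
\[
N_k = \sum_{l\le k} n_l + \sum_{l>k}\lceil l/k\rceil n_l = \sum_{l} n_l + \sum_{l>k}(\lceil l/k\rceil-1)n_l .
\]
This is an upper-unitriangular-type system in the unknowns once $n_{k+1},\dots,n_p$ are known, but it does not directly isolate $n_k$.

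To close the argument I would use back-substitution. Let $M_k := N_k - \sum_{l} n_l = \sum_{l>k}(\lceil l/k\rceil -1)n_l$; each $M_k$ is computable from the public values since $\sum_l n_l = N_p$. Then $M_{k-1} - \sum_{l>k}(\lceil l/(k-1)\rceil-1)n_l = (\lceil k/(k-1)\rceil -1)\,n_k = n_k$ for $k\ge 2$, because $\lceil k/(k-1)\rceil = 2$. So, proceeding downward from $k=p$, each $n_k$ is determined by $M_{k-1}$ and the already-recovered $n_{k+1},\dots,n_p$. Finally, $n_1 = N_p - \sum_{l\ge 2} n_l$. This recovers every $n_l$ from the transcripts' public inputs, which proves the proposition. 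I expect the only subtle point to be checking the coefficient identity $\lceil k/(k-1)\rceil=2$ and the vanishing pattern, since the rest is bookkeeping.
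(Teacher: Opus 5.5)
Your argument is correct and is essentially the paper's proof: your $M_k = N_k - N_p$ is exactly the paper's vector $\vec{t_1}$ with entries $l_k - m$, and your downward back-substitution via $\lceil k/(k-1)\rceil = 2$ is the same peeling step as the paper's ``the last non-zero entry gives the number of longest chains; subtract their contribution $\vec{e_1}$ and repeat.'' Your version is slightly more explicit than the paper's ``proceed similarly.'' In particular, you recover $n_1 = N_p - \sum_{l\ge 2} n_l$ separately, which is needed because length-$1$ chains contribute nothing to any $M_k$.
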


\subsection{Deriving a Peak Memory Estimator}\label{sec:estimator-details}

While \textsc{ZkUnsat} provides reliable runtime estimates, no comparable prediction exists for its peak memory consumption. We address this gap by analyzing the implementation and deriving a lightweight heuristic $E_m$, that allows a user to predict, before committing to a full protocol execution, whether a given normalized proof will fit within a memory budget. The estimator depends on the proof parameters $|\varphi|$ (the number of input clauses), $|\pi|$ (the number of chains in the normalized proof), $k$ (the chain length), and the largest clause degree $d$, together with two \textsc{ZkUnsat} configuration constants: the \textsc{ZkRam} index bit-width $I$ (so the maximum table size is $2^{I-1}$) and the consistency-check batch size $c_{\varphi}$, detailed below.

\noindent\textbf{Batched Consistency Checks.} The \textsc{ZkRam} maintains an \emph{access record} that is initialized with the $|\varphi|+|\pi|$ clauses in the proof. Every time a clause is subsequently retrieved, the clause and its index are appended to this log. To amortize the cost of verifying the log's consistency, the implementation does not check after every access. Instead, it executes a \texttt{check()} routine that sorts the accumulated record entries, verifies that every accessed index maps to the correct clause, and then truncates the record back to its initial size $|\varphi| +|\pi|$. The routine is invoked automatically once every $c_{\varphi}$ accesses, where $c_{\varphi} = 2(|\varphi|+|\pi|)$ by default.

\noindent\textbf{Counting Accesses and Checks.} In $k$-Chain-\textsc{ZkUnsat}, each normalized chain triggers $(k+2)$ \textsc{ZkRam} accesses: one for each of the $k+1$ clauses in the chain, plus one final access to assert that the resulting clause matches the entry stored in the \textsc{ZkRam}. Over all $|\pi|$ chains, the total number of accesses is $(k+2)\cdot|\pi|$, and the number of access-triggered \texttt{check()} invocations is
$N_{\mathsf{chk}} =\left\lfloor \frac{(k+2) \cdot |\pi|}{c_{\varphi}} \right\rfloor$.

\begin{table}[t]
    \centering
    \renewcommand{\arraystretch}{1.2}
    \begin{tabular}{@{} l l @{}}
        \toprule
        \textbf{Data Structure} & \textbf{Size (in bits)} \\
        \midrule
        \texttt{sorted\_clear\_access}   & $(n)(64 + 64d)$ \\
        \texttt{clear\_access\_record}   & $(n)(64 + 64d)$ \\
        \addlinespace
        \texttt{access\_record} (index)  & $(n)(128I)$ \\
        \texttt{access\_record} (clause) & $(n)(d(128\cdot 2+64))$\\
        \texttt{sorted\_index}           & $(n)(128I)$ \\
        \texttt{sorted\_clause}          & $(n)(d(128\cdot 2+64))$\\
        \addlinespace
        \texttt{HRecord}                 & $(n)(128)$ \\
        \texttt{HRecord\_mac}            & $(n)(128)$ \\
        \texttt{sorted\_HRecord}         & $(n)(128)$ \\
        \texttt{sorted\_HRecord\_mac}    & $(n)(128)$ \\
        \addlinespace
        \texttt{sorted\_hash\_value}     & $(n)(128^2)$ \\
        \bottomrule
    \end{tabular}
    \caption{Largest data structures allocated during \texttt{check()}, where $n$ denotes the number of entries in the access record at invocation time.}
    \label{tab:zkram_data_structures}
\end{table}

Table~\ref{tab:zkram_data_structures} details the size of the primary data structures allocated during this operation. By summing the bit-sizes from Table~\ref{tab:zkram_data_structures} and converting the total to 128-bit blocks (a standard metric for emp-tool \cite{emp-toolkit} structures), we arrive at our peak memory estimator. We must account for two scenarios: if $N_{\mathsf{chk}} \ge 1$, peak memory occurs during an intermediate \texttt{check()} with exactly $|\varphi|+|\pi|+c_{\varphi}$ elements, and if $N_{\mathsf{chk}} = 0$, the number of accesses never reaches the consistency-check batch size $c_{\varphi}$, and the sole \texttt{check()} occurs at the end of the protocol with $|\varphi| + |\pi|+ (k+2)\cdot|\pi|$ elements. Therefore, the peak memory usage in 128-bit blocks is:

$$\boxed{E_m = \begin{cases}
(|\varphi|+|\pi|+c_{\varphi})\cdot(2I+133+6d) & \text{if } N_{\mathsf{chk}} \ge 1 \\
(|\varphi|+|\pi|+|\pi|(k+2))\cdot(2I+133+6d) & \text{otherwise}
\end{cases}}$$

We emphasize that $E_m$ is an implementation-level heuristic tied to the current \textsc{ZkUnsat} codebase, not a fundamental bound on the memory requirements of zero-knowledge UNSAT certification. Changes to the \textsc{ZkRam}'s batching strategy or internal data structures would require re-deriving the estimator. Nevertheless, $E_m$ serves a  concrete and practical role: it allows us to predict, before committing to a full protocol execution, whether a given (instance, $k$) pair will fit within a memory budget.

\section{Experimental Evaluation}\label{sec:evaluation}
We evaluate $k$-Chain-\textsc{ZkUnsat} empirically, addressing four questions:
(1)~How many additional instances can $k$-Chain-\textsc{ZkUnsat} certify compared to the baseline \textsc{ZkUnsat} under realistic resource constraints?
(2)~How do runtime and peak memory vary with $k$?
(3)~How should one select $k$ in practice?
(4)~How well does the peak memory estimator $E_m$ predict the actual peak memory of $k$-Chain-\textsc{ZkUnsat} across different values of $k$?

Section~\ref{sec:exp-setup} describes our experimental setup, Section~\ref{sec:results} presents performance results for $k$-Chain-\textsc{ZkUnsat}, and Section~\ref{sec:peak-mem-results} evaluates the peak memory estimator.

\subsection{Setup}\label{sec:exp-setup}
For all experiments, we use a cluster equipped with two 96-core AMD EPYC 9655 processors running at a 2.6 GHz base frequency, with 810 GB of memory per node. To maximize throughput, we deploy 10 \textsc{ZkUnsat} prover–verifier pairs concurrently per node, limiting the virtual memory of each prover and verifier process individually to 32GB using \texttt{ulimit}. We also co-locate each prover-verifier pair on the same node and route their protocol communication via localhost. 

To generate proofs and perform evaluations, we pick the SAT Competition 2002 benchmark suite. We select this benchmark suite over the benchmarks in \cite{LAH+22} because the latter demonstrates the scalability of \textsc{ZkUnsat} on only 58 instances---a sample too small to support statistically meaningful conclusions for our research questions. Moreover, the SAT 2002 benchmarks represent an achievable frontier for \textsc{ZkUnsat} as the original \textsc{ZkUnsat} already exhausts memory on the majority of these instances. Modern competition benchmarks, which tend to produce substantially larger proofs, would yield even fewer certifiable instances, limiting the scope for meaningful comparison. We obtain 1,964 CNF formulas by combining both the generated and the submitted instance sets. We solve these instances with CaDiCaL v2.1.3 \cite{BFF+24} under a 3,600s timeout, with UNSAT certification enabled via the flags $\texttt{{-}{-}lrat}$ and $\texttt{{-}{-}binary=false}$. CaDiCaL returns UNSAT certificates for 554 instances. 

For these 554 instances, we post-process the produced LRAT proofs using lrat-trim. With a virtual memory limit of 8\,GB, lrat-trim successfully produced 548 trimmed proofs. These trimmed proofs are the input to our preprocessing pipeline, which consists of two Python scripts.\footnote{Available at \url{https://github.com/meelgroup/chain-normalize}} The first script is a modified version of the \texttt{ExtendProof.py} preprocessor shipped with \textsc{ZkUnsat}, which converts a formula and its trimmed LRUP-style proof into a chained resolution proof in the format expected by \textsc{ZkUnsat}: it reverses the hint order of every proof line and recovers the pivot of each resolution step. This conversion succeeded on 510 instances. The resulting chained resolution proofs serve as the parent proofs for our evaluation. Our second script is an implementation of the Normalize algorithm (Algorithm~\ref{alg:normalize}), which given a parent proof and a chain bound $k$, normalizes every chain to $k$-Chain(s) and emits the resulting $k$-Chain weakened resolution proof, which is then passed to the \textsc{ZkUnsat} prover.

We consider five preprocessing configurations with chain bound $k \in \{1, 3, 5, 16, 32\}$. Note that the configuration $k = 1$ corresponds precisely to the baseline \textsc{ZkUnsat}. We select $k \in \{3, 5, 16, 32\}$ since excessively large values of $k$ introduce a significant number of padded hints, increasing runtime without further reducing memory. We set the timeout to be $25{,}000$ seconds. 

Since \textsc{ZkUnsat} fixes the bit-width used to represent clause indices via the macro $\texttt{index\_sz}$ (set to 20 by default), which effectively caps the maximum clause index at $2^{19}=524{,}288$, \textsc{ZkUnsat}'s default configuration cannot accommodate several of our larger proofs. Since smaller values of $\texttt{index\_sz}$ yield better runtime performance, we increased this parameter to $\texttt{index\_sz}=22$ after consulting a developer of \textsc{ZkUnsat}, raising the limit to $2^{21}=2{,}097{,}152$ clause indices.\footnote{Even with this increase, an error arose in baseline \textsc{ZkUnsat} since one proof contained $4{,}187{,}924$ indexed clauses; increasing $\texttt{index\_sz}$ to 24 for this instance caused the prover to abort due to memory exhaustion.} Apart from this change, the prover and verifier of the \textsc{ZkUnsat} implementation are unmodified.

\begin{table}[t]
    \centering
    \begin{tabular}{ccccc}
      \toprule
      \textbf{Variant} & \textbf{Verified} & \textbf{Aborts (OOM)} & \textbf{Timeouts} \\
      \midrule
      \textsc{ZkUnsat}  & 220 & 290 & 0 \\
      k=3 & 276 & 234 & 0 \\
      k=5 & 317 & 193 & 0 \\
      {\bfseries k=16} & {\bfseries 357} & {\bfseries 146} & {\bfseries 7} \\
      k=32 & 339 & 121 & 50 \\
      \bottomrule
    \end{tabular}
    \caption{Performance of different variants of \textsc{ZkUnsat} on the SAT 2002 competition benchmarks (timeout: 25,000\,s; virtual memory limit per prover and verifier process: 32\,GB; \texttt{index\_sz}=22).}
    \label{tab:summary-25000s}
\end{table}

\subsection{Results}\label{sec:results}
Table \ref{tab:summary-25000s} summarizes the performance of $k$-chain-\textsc{ZkUnsat} for varying values of $k$. For each $k \in \{1, 3, 5, 16, 32\}$, the 510 test instances are split across three columns: Verified, Aborts (OOM) and Timeouts. The verified column notes how many instances were verified within the time and memory constraints. The Aborts (OOM) column notes the number of instances on which the protocol exceeded the memory limit, and the Timeouts column notes the number of instances on which the protocol ran out of time. Note that the first row shows the performance of the original $\textsc{ZkUnsat}$ protocol, which corresponds to $k = 1$. We take this to be our baseline.

\noindent\textbf{Overall Performance.} Baseline \textsc{ZkUnsat} successfully certifies 220 of the 510 instances, with 290 aborted due to memory exhaustion and none timing out. In contrast, $16$-Chain-\textsc{ZkUnsat} certifies 357 instances, an improvement of approximately 62\%, while reducing the number of memory-related aborts from 290 to 146. This confirms that the chain normalization technique in Section \ref{sec:chain-normalization} effectively addresses the memory bottleneck of \textsc{ZkUnsat}.

\begin{figure}
    \centering
    \input{plots/zkunsat_vs_k-chain-zkunsat_runtimes}
    \caption{Runtime (s) of $k$-Chain-\textsc{ZkUnsat} (y-axis) against runtime (s) of
    baseline \textsc{ZkUnsat} (x-axis), restricted to instances verified by baseline
    \textsc{ZkUnsat}. Each panel corresponds to a different value of $k$. Points below
    the diagonal represent instances with faster verification under $k$-Chain-\textsc{ZkUnsat}.}
    \label{fig:scatter_plot_rt}
\end{figure}

\noindent\textbf{Effect of Varying k.} Increasing $k$ from 1 to 32 monotonically reduces memory-related aborts: $k = 3$ reduces aborts to 234, $k = 5$ to 193, $k = 16$ to 146, and $k = 32$ to 121. However, larger values of $k$ introduce additional padding operations per chain, which increases runtime. This trade-off is visible in the timeout column of Table~\ref{tab:summary-25000s}: while $k \in \{3, 5\}$ incur no timeouts and $k = 16$ incurs only 7, $k = 32$ times out on 50 instances under the $25{,}000$\,s limit. As a result, 32-Chain-\textsc{ZkUnsat} verifies fewer instances (339) than 16-Chain-\textsc{ZkUnsat} (357), despite having fewer memory aborts. This demonstrates that $k = 16$ strikes the best balance between memory savings and runtime overhead on this benchmark suite.

\noindent\textbf{Runtime Comparison.} Figure \ref{fig:scatter_plot_rt} shows scatter plots comparing the runtime of $k$-Chain-\textsc{ZkUnsat} (y-axis) against baseline \textsc{ZkUnsat} (x-axis) for each $k \in \{3,5, 16,32\}$, restricted to instances verified by baseline \textsc{ZkUnsat}. Note that every such instance is also verified by $k$-Chain-\textsc{ZkUnsat} for all values of $k$ considered. Points below the diagonal correspond to instances where $k$-Chain-\textsc{ZkUnsat} is faster. 
\begin{itemize}
    \item For $k = 3$ and $k = 5$, the vast majority of points lie below the diagonal, indicating consistent speedups across the benchmark suite.
    \item For $k = 16$, most points remain below the diagonal, though a small number of instances exhibit comparable or slightly higher runtimes.
    \item For $k = 32$, several points shift above the diagonal, reflecting the increased cost of padding short chains to length 32.
\end{itemize}

\begin{figure}[tb]
    \centering
    \begin{subfigure}[t]{0.49\textwidth}
        \centering
        \input{plots/cdf_plot_mem_vs_num_instances_25000s}
        \caption{Peak memory comparison of \textsc{ZkUnsat} against $k$-Chain-\textsc{ZkUnsat} for $k \in \{3,5,16,32\}$ on the SAT 2002 Competition benchmarks.}
        \label{fig:cdf_plot_mem_vs_num_instances_25000s}
    \end{subfigure}
    \hfill
    \begin{subfigure}[t]{0.49\textwidth}
        \centering
        \vbox to 5.5cm{\input{plots/memory_estimator}\vfil}
        \caption{Correlation between the memory estimator $E_m$ (with $c_\varphi = 2(|\varphi|+|\pi|)$) and the prover's peak memory usage. The line of best fit achieves $R^2 = 0.9984$.}
        \label{fig:memory_est}
    \end{subfigure}
    \caption{Memory usage and estimation for \textsc{ZkUnsat} and $k$-Chain-\textsc{ZkUnsat} for $k \in \{3, 5, 16,32\}$ on the SAT 2002 Competition benchmarks.}
    \label{fig:memory_combined}
\end{figure}

\noindent\textbf{Memory Comparison.} Figure \ref{fig:cdf_plot_mem_vs_num_instances_25000s} shows the number of instances verified as a function of peak prover memory usage for each configuration. A point $(x, y)$ indicates that $y$ instances were verified with a peak memory usage of at most $x$ GB. We see that while the curve for baseline \textsc{ZkUnsat} plateaus at 220 instances, 16-Chain-\textsc{ZkUnsat} verifies 220 instances---the same count as baseline ZkUnsat's total---using less than 25\% of the peak memory that baseline \textsc{ZkUnsat} required.

\noindent\textbf{Shifting the Bottleneck.} A key qualitative observation is that chain normalization shifts the performance bottleneck from memory to runtime. For baseline \textsc{ZkUnsat}, all 290 failures are caused by memory exhaustion. For 16-Chain-\textsc{ZkUnsat}, only 146 of 153 failures are memory-related, with 7 due to timeouts, and finally for 32-Chain-\textsc{ZkUnsat}, timeouts account for 50 of 171 failures, while memory aborts drop to 121.

\subsubsection{Selecting k}
The chain length parameter $k$ must be fixed a priori to preserve the zero-knowledge guarantees established in Theorem \ref{thm:k-chain-zk}, so we conclude with practical guidance for its selection. Our results reveal a clear trade-off governed by two competing effects: increasing $k$ reduces the prover's peak memory footprint by avoiding intermediate clause materialization, but simultaneously increases runtime due to the padding of chains whose length is not a multiple of $k$. This tension is quantified in Table~\ref{tab:summary-25000s}, where moving from $k=1$ to $k=16$ monotonically increases the number of verified instances (from 220 to 357) as memory aborts drop. However, beyond this point, the padding overhead dominates -- $k=32$ suffers 50 timeouts despite having the fewest memory aborts (121), ultimately certifying fewer instances than $k=16$. Figure~\ref{fig:scatter_plot_rt} corroborates this pattern at the per-instance level, where for $k=3$ and $k=5$, nearly all points fall below the diagonal, indicating consistent speedups, whereas for $k=16$ and $k=32$ a growing number of instances shift above it.

More broadly, the best choice of $k$ depends on the resource profile of the deployment environment. In time-restricted settings where wall-clock budget is the binding constraint, our results suggest $k \le 16$: values in this range yield substantial memory savings while introducing few or no timeouts, whereas in memory-restricted settings, where the prover must operate under a tight memory ceiling, larger values of $k$ are preferable, since each increase in $k$ reduces the number of intermediate clauses committed to the \textsc{ZkRam}. This is corroborated in  Figure~\ref{fig:cdf_plot_mem_vs_num_instances_25000s}: 16-Chain-\textsc{ZkUnsat} matches the baseline's total of 220 verified instances using less than 25\% of the maximum peak memory used by baseline \textsc{ZkUnsat}, leaving considerable headroom for instances that the baseline could not attempt.

\subsection{Peak Memory Estimation}\label{sec:peak-mem-results}
To evaluate the accuracy of the peak memory estimator $E_m$ derived in Section \ref{sec:estimator-details}, we compare the predicted values of $E_m$ against the observed peak memory usage of the prover across all configurations $k \in \{3,5,16,32\}$ and the baseline \textsc{ZkUnsat}.

Figure \ref{fig:memory_est} demonstrates the strong correlation between our estimator $E_m$ (with $c_{\varphi} = 2(|\varphi|+|\pi|)$) and the protocol's peak memory usage across all configurations. Each point corresponds to a single (instance, $k$) pair, and the five configurations are distinguished by marker style. The line of best fit achieves an $R^2 = 0.9984$, showing that the estimator is a near-perfect linear predictor of peak memory across the full range of configurations and instance sizes. 

Notably, the line of best fit holds uniformly across all our configurations. In particular, the estimator remains accurate both for the baseline \textsc{ZkUnsat}, where a significant number of instances used more than 20 GB, and for 32-Chain-\textsc{ZkUnsat}, where almost all instances used less than 20 GB, suggesting that $E_m$ can serve as a lightweight feasibility filter, allowing practitioners to skip doomed executions and allocate computational resources only to instances likely to complete within a given memory budget.

\section{Conclusion}\label{sec:conclusion}
We presented $k$-Chain-\textsc{ZkUnsat}, an approach to scaling zero-knowledge
certification of Boolean formula unsatisfiability.
Our analysis showed that proof unfolding---expanding chained resolution steps
into binary resolution sequences---is the main source of memory overhead in
\textsc{ZkUnsat}, a finding we confirmed via the Var-Chain variant.
To fix this without losing zero-knowledge guarantees, we introduced a
normalization algorithm that rewrites any chained resolution proof so that
every chain has length exactly $k$, and proved that the resulting protocol
reveals nothing beyond the number of proof lines and the public parameter $k$.
On the SAT 2002 benchmarks, $k = 16$ certifies roughly 62\% more instances
than baseline \textsc{ZkUnsat}, and for an equivalent number of certified instances, 
keeps peak prover memory below 25\% of what the baseline requires---moving the bottleneck from memory to runtime.
We also developed a peak memory estimator $E_m$ that achieves $R^2 = 0.9984$,
giving practitioners a simple way to predict whether an instance will fit in
memory before running the full protocol.

\section{Acknowledgement} 
We thank Arijit Shaw, Nikolay Avramov, Hidde Lycklama, Alexander Viand, and Ning Luo for many useful discussions. This work was partially supported by the Natural Sciences and Engineering Research Council of Canada (NSERC) through a Discovery Grant [RGPIN-2025-06535, RGPIN-2024-05956]. Computations were performed on the Trillium supercomputer at the SciNet HPC Consortium. SciNet is funded by Innovation, Science and Economic Development Canada; the Digital Research Alliance of Canada; the Ontario Research Fund: Research Excellence; and the University of Toronto.

\label{sect:bib}
\bibliographystyle{plain}
\bibliography{bib}

\appendix

\section{Formal Definitions}\label{sec:appendix:formal-defs}
We present formal definitions of the notions presented in this paper.
\subsection{Interactive Proofs}\label{sec:appendix:formal-def:IP}
We present the definition of the augmented interactive proof system as in \cite{GoldreichFoC06}. We begin by defining an interactive turing machine (with auxiliary input).

\begin{definition} ``(An Interactive Machine):
\begin{itemize}
    \item  An \textbf{interactive Turing machine} (ITM) is a (deterministic) multi-tape Turing machine. The tapes are a read-only input tape, a read-only random tape, a read-andwrite work tape, a write-only output tape, a pair of communication tapes, and a read-and-write switch tape consisting of a single cell. One communication tape is read-only, and the other is write-only.
    \item Each ITM is associated a single bit $\sigma \in \{0,1\}$, called its \textbf{identity}. An ITM is said to be active, in a configuration, if the content of its switch tape equals the machine’s identity. Otherwise the machine is said to be idle. While being idle, the state of the machine, the locations of its heads on the various tapes, and the contents of the writable tapes of the ITM are not modified.
    \item The content of the input tape is called \textbf{input}, the content of the random tape is called random input, and the content of the output tape at termination is called \textbf{output}. The content written on the write-only communication tape during a (time) period in which the machine is active is called the \textbf{message sent} at that period. Likewise, the content read from the read-only communication tape during an active period is called the \textbf{message received} (at that period).
    
    (Without loss of generality, the machine movements on both communication tapes are in only one direction, e.g., from left to right.)"
\end{itemize}    
\end{definition}
\noindent An interactive machine can be augmented to have an additional read-only tape called the \textbf{auxiliary-input tape}. The content of
this tape is called \textbf{auxiliary input}.

``The complexity of such an interactive machine is still measured as a function of the
(common) input length. Namely, the interactive machine A has time-complexity $t: \mathbb{N} \to \mathbb{N}$ if for every interactive machine $B$ and every string $x$, it holds that when interacting with machine $B$, on common input $x$, machine $A$ always (i.e., regardless of the content of its random tape and its auxiliary-input tape, as well as the content of $B$’s tapes) halts within $t(|x|)$ steps.

We denote by $\langle A(y), B(z)\rangle(x)$ the random variable representing the (local) output
of $B$ when interacting with machine $A$ on common input $x$, when the random input
to each machine is uniformly and independently chosen, and $A$ (resp., $B$) has
auxiliary input $y$ (resp., $z$)."

\begin{definition} \textbf{(Interactive Proof System with auxiliary input):} `` A pair of interactive machines (P,V) is called an interactive proof system for a language $L$ if machine $V$ is polynomial-time and the following two conditions hold:
    \begin{itemize}
        \item Completeness: For every $x \in L$, there exists a string $y$ such that for every $z \in \{0,1\}^*$
        $$\Pr[\langle P(y), V(z)\rangle(x) = 1] \ge \frac{2}{3}$$
        \item Soundness:
        For every $x \notin L$ and every interactive machine $B$, and every $y, z \in \{0,1\}^*$
        $$\Pr[\langle B, V\rangle (x) = 1] \le \frac{1}{3}$$"
    \end{itemize}
\end{definition}
It is to be noted that while the verifier $V$ in the above definition is required to be a (probabilistic) polynomial time machine, the prover is unbounded in its computational power.

\subsection{Proofs of Knowledge}\label{sec:appendix:formal-def:PoK}
We present the definition as in \cite{HL10}, adapted from \cite{BG92}. We assume that both $\mathcal{P}$ and $V$ are PPT machines, and begin by defining an interactive function as in \cite{BG92}. We replace the source's variable name $x$ with $\psi$ following our notation for common input.
\begin{definition}
    An interactive function $A$ associates to each $\psi \in \{0,1\}^*$ and $\eta \in \{0,1\}^*$ (prefix of a conversation) a probability distribution on $\{0, 1\}^*$.
\end{definition}

Let $R \subseteq \{0,1\}^* \times \{0,1\}^*$ be an NP relation. i.e. if $(x, w) \in R$, then $|w| < p(|x|)$ for some polynomial $p$ (The $|\cdot|$ operator denotes the length of the string). Now, we define $L_R = \{x : (x, w) \in R\}$ as in \cite{HL10}. A proof of knowledge for the relation $R$ is defined as follows.
\begin{definition}
    Let $\kappa:\{0,1\}^* \to [0,1]$ be a function. A protocol $(\mathcal{P},V)$ is a proof of knowledge for the relation $R$ with knowledge error $\kappa$, if it satisfies the following properties:
    \begin{itemize}
        \item \textbf{Completeness.} If $\mathcal{P}$ and $V$ follow the protocol on input $\psi$ and private input $w$ to $\mathcal{P}$ where $(\psi,w ) \in R$, then $V$ always accepts.
        \item \textbf{Knowledge soundness/validity.} There exists a constant $c > 0$ and a probabilistic oracle machine $K$, called the knowledge extractor, such that for every interactive prover function $\mathcal{P}^*$ and every $\psi \in L_R$, the machine $K$ satisfies the following condition. Let $\epsilon(\psi)$ be the probability that $V$ accepts on input $\psi$ after interacting with $\mathcal{P}^*$. If $\epsilon(\psi) > \kappa(\psi)$, then upon input $\psi$ and oracle access to $\mathcal{P}^*$, the machine $K$ outputs a string $w$ such that $(\psi, w) \in R$ within an expected number of steps bounded by
        $$ \frac{|\psi|^c}{\epsilon(\psi) - \kappa(\psi)}$$
    \end{itemize}
\end{definition}

\subsection{Zero Knowledge Proofs}\label{sec:appendix:formal-def:ZKP}
We present the definition of Perfect Zero-Knowledge as in \cite{GoldreichFoC06} augmented to accept auxiliary inputs.
\begin{definition}[Perfect Zero-Knowledge with respect to auxiliary input]
Let $(P,V)$ be an interactive proof system with auxiliary inputs for some language $L$. For $x \in L$, let
$P_L(x)$ be the set of prover auxiliary inputs $y$ satisfying the completeness
condition with respect to $x$. We say that $(P,V)$ is \textbf{perfect
zero-knowledge with respect to auxiliary input} if for every probabilistic
polynomial-time interactive machine $V^*$ there exists a probabilistic
algorithm $M^*$, running in time polynomial in the length of its first input,
such that for every $x \in L$, every $y \in P_L(x)$, and every
$z \in \{0,1\}^*$ the following two conditions hold:
\begin{enumerate}
  \item With probability at most $\tfrac{1}{2}$, on input $(x,z)$, machine
    $M^*$ outputs a special symbol denoted $\bot$
    \big(i.e., $\Pr[M^*(x,z) = \bot] \le \tfrac{1}{2}$\big).

  \item Let $m^*(x,z)$ be a random variable describing the distribution of
    $M^*(x,z)$ conditioned on $M^*(x,z) \neq \bot$
    \big(i.e., $\Pr[m^*(x,z) = \alpha] = \Pr[M^*(x,z) = \alpha \mid M^*(x,z) \neq \bot]$
    for every $\alpha \in \{0,1\}^*$\big). Then the following random variables
    are identically distributed:
    \begin{itemize}
      \item $\langle P(y), V^*(z) \rangle(x)$ \quad (i.e., the output of the
        interactive machine $V^*$, on auxiliary input $z$, after interacting
        with the interactive machine $P$, on auxiliary input $y$, on common
        input $x$)
      \item $m^*(x,z)$ \quad (i.e., the output of machine $M^*$ on input
        $(x,z)$, conditioned on not being $\bot$)
    \end{itemize}
\end{enumerate}
Machine $M^*$ is called a \textbf{perfect simulator} for the interaction of
$V^*$ with $P$.
\end{definition}

\section{Functionalities}\label{sec:appendix:functionalities}
We present the ideal functionalities $\Func[ZK], \Func[Clause]$ and $\Func[FlexZKArray]$ as presented in \cite{LAH+22} in Figure \ref{func:zk}, \ref{func:clause}, and \ref{func:zkarray} to define the $k$-Chain-\textsc{ZkUnsat} protocol (Figure \ref{prot:k-chain-zkunsat}).

\begin{figure}[!t]
\begin{nffunc}{$\Func[ZK]$}
\textbf{Witness:}
On receiving $({\sf Witness}, x)$ from the prover, where $x\in\mathbb{F}$, store $x$ and
send $[x]$ to each party.

\textbf{Instance:}
On receiving $({\sf Instance}, x)$ from both parties, where $x\in \F$, store $x$ and
send $[x]$ to each party. If the inputs sent by the two parties do not match, the functionality aborts.

\textbf{Circuit relation:}
On receiving $({\sf Relation}, C, [x_0],\ldots,$ $[x_{n-1}])$ from both parties,
where $x_0\in \F$ and $C\in \F^n \rightarrow\F^m$,
compute $y_1,\ldots,$$y_m:=C(x_0,\ldots,x_{n-1})$ and send $\{[y_1],\ldots,[y_m]\}$ to both parties.

\textbf{Productions-of-polynomial equality check:} On receiving $({\sf PoPEqCheck}, n$ 
   $\{[P_i(X)]\}_{i\in[n]},  \{[Q_i(X)]\}_{i\in[m]})$ from both parties, where $[P_i(X)]$ and $[Q_i(X)]$ are polynomials with their coefficients committed: if $\Pi_iP_i(X)\neq \Pi_i Q_i(X)$, the functionality aborts.
\end{nffunc}
\caption{Functionality for zero-knowledge proofs of circuit satisfiability and polynomials.}
    \label{func:zk}
\end{figure}

\begin{figure}[!t]
\begin{nffunc}{$\Func[Clause]$}
\textbf{Input:}
On receiving $({\sf Input}, \ell_0, \dots, \ell_{k-1}, w)$ from prover and $({\sf Input}, w)$ from verifier where $\ell_i \in \sf Lits$, the functionality check that $k\leq w$ and abort if it does not hold. Otherwise store $C = \ell_0\lor  \cdots \lor \ell_{k-1}$, and send $[C]$ to each party.

\textbf{Equal:} On receiving $({\sf Equal}, [C_0], [C_1])$ from both parties, check if $C_0 = C_1$; if not, the functionality aborts.

\textbf{X-Res:} On receiving $({\sf Xres}, [C_0], [C_1], [C_r])$ from both parties, check if $\{C_0, C_1\} \vdash_{\text{X-Res}} C_r$; if not the functionality aborts.

\textbf{IsFalse:} On receiving $({\sf IsFalse}, [C])$ from both parties, check if $C = \bot;$ if not, the functionality aborts.
\end{nffunc}
\caption{Functionality for ZK operations on clauses.}
    \label{func:clause}
\end{figure}

\begin{figure}[!t]
\begin{nffunc}{$\Func[FlexZKArray]$}
\textbf{Array initialization:}
On receiving $({\sf Init}, N, [m_0], \dots, [m_{N-1}])$ from $\mathcal{P}$ and $\mathcal{V}$, where $ m_i \in \mathbb{F}$, store the $m_i$ and set $f:={\sf honest}$ and ignore subsequent initialization calls.

\textbf{Array read:}
On receiving $({\sf Read}, \ell, d, t)$ from $\mathcal{P}$, and $({\sf Read}, t)$ from $\mathcal{V}$, where $d \in \F$ and $\ell, t \in \mathbb{N}$, send $[d]$ to each party. If $d \ne m_\ell$ or $t$ from both parties do not match or $\ell \ge t$ then set $f:= {\sf cheating}$.

\textbf{Array check:}
Upon receiving $({\sf check})$ from $\mathcal{V}$ do: If $\mathcal{P}$ sends $({\sf cheat})$ then send ${\sf cheating}$ to $\mathcal{V}$. If $\mathcal{P}$ sends ${\sf continue}$ then send $f$ to $\mathcal{V}$,
\end{nffunc}
\caption{Functionality for weak random access arrays in ZK.}
    \label{func:zkarray}
\end{figure}

\section{The k-Chain-ZKUNSAT Protocol}\label{sec:appendix-protocol}
We present the $k$-Chain-\textsc{ZkUnsat} protocol (Figure \ref{prot:k-chain-zkunsat}) closely modeled after the "CheckProof" protocol in \cite{LAH+22}.

\begin{figure}[!t]
\begin{nfprot}{$k$-Chain-\textsc{ZkUnsat}}
\textbf{Inputs:}
Both parties have formula $\varphi = C_0 \land \dots \land C_{|\varphi|-1}$. Prover $\mathcal{P}$ has a $k$-normalized weakened resolution proof $((h_{10}, h_{11}, \dots, h_{1k}), \dots, (h_{|\pi|0}, \dots, h_{|\pi|k}))$ (This notation denotes the ordered indices of the clauses being resolved in each chain); both parties know the length of the normalized proof $|\pi|$, the chain bound $k$, and the maximum clause width in the proof $d$ (This is the maximum width of all clauses including the intermediate clauses derived via resolution within a chain).

\textbf{Protocol:}

\begin{enumerate}
    \item The two parties obtain $[C_i]_{i \in [0, |\varphi|-1]}$ using $\Func[Clause]$: since $\varphi$ is known to both parties, it uses $\sf instance$ to authenticate the coefficients.
    \item $\mathcal{P}$ locally gets $C_{|\varphi|-1+i}$ for  $i \in [1, |\pi|]$ from their normalized weakened resolution proof. The two parties obtain $[C_i]_{i \in [|\varphi|, |\varphi| - 1+|\pi|]}$ using \Func[Clause] using $\sf witness$ to authenticate the coefficients.
    \item The two parties send $({\sf Init}, |\varphi|+|\pi|, [C_0], \dots, [C_{|\varphi|+|\pi|-1}])$ to \Func[FlexZKArray].
    \item For the $i$-th iteration, the two parties advance the proof check by doing the following for $j$ ranging from 1 to $k$:
    \begin{enumerate}
        \item[(a)] When $j=1$, the prover looks up the tuple $(h_{i0}, h_{i1})$ from the k-normalized weakened resolution proof, and computes the clause $C_{i}^{(1)}$ such that $\{C_{h_{i0}}, C_{h_{i1}}\} \vdash_{\text{X-Res}} C_{i}^{(1)}$ and $C_i^{(1)} \odot C_{h_{i2}} \odot \dots \odot C_{h_{ik}}=C_{|\varphi|-1+i}$. \\
        
        When $j\in \{2, \dots, k\}$, the prover looks up $h_{ij}$ from the k-normalized weakened resolution proof, and computes a clause $C_{i}^{(j)}$ such that $\{C_{i}^{(j-1)}, C_{h_{ij}}\} \vdash_{\text{X-Res}} C_{i}^{(j)}$ and $C_i^{(j)} \odot \dots \odot C_{h_{ik}} = C_{|\varphi|-1+i}$.
        \item[(b)] Fetching the premises: when $j=1$, the prover sends $(\textsf{Read}, h_{i0}, C_{h_{i0}}, |\varphi|-1+i)$ and $(\textsf{Read}, h_{i1}, C_{h_{i1}}, |\varphi|-1+i)$ to $\Func[FlexZKArray]$; $\mathcal{V}$ sends $(\textsf{Read}, |\varphi|-1+i)$ twice to $\Func[FlexZKArray]$, from which the two parties obtain $[C_{h_{i0}}]$ and $[C_{h_{i1}}]$. \\
        
        When $j\in \{2, \dots, k\}$ the prover sends $(\textsf{Read}, h_{ij}, C_{h_{ij}}, |\varphi|-1+i)$ to $\Func[FlexZKArray]$; $\mathcal{V}$ sends $(\textsf{Read}, |\varphi|-1+i)$ to $\Func[FlexZKArray]$, from which the two parties obtain $[C_{h_{ij}}]$. Similarly, the two parties obtain $[C_{|\varphi|-1+i}]$.
        \item[(c)] Committing intermediate clauses: when $j \in \{1, \dots, k-1\}$, the two parties obtain $[C_i^{(j)}]$ using $\Func[Clause]$ with $\sf witness$ to authenticate the coefficients.
        
        \item[(d)]Checking the chain of inferences: When $j = 1$, the two parties send $(\textsf{Xres}, [C_{h_{i0}}], [C_{h_{i1}}], [C_i^{(1)}])$ to $\Func[Clause]$. When $j \in \{2, \dots, k-1\}$, the two parties send $(\textsf{Xres}, [C_i^{(j-1)}], [C_{h_{ij}}], [C_i^{(j)}])$ to $\Func[Clause]$, and when $j = k$, the two parties send $(\textsf{Xres}, [C_i^{(k-1)}], [C_{h_{ik}}], [C_{|\varphi|-1+i}])$ to $\Func[Clause]$.
    \end{enumerate}
    \item After $|\pi|$ iterations, the two parties use $\Func[Clause]$ to check that $[C_{|\varphi|-1+|\pi|}]$ equals $\bot$; if the functionality aborts, $\mathcal{V}$ aborts.

    \item The two parties send $(\textsf{check})$ to $\Func[FlexZKArray]$; if the functionality aborts, $\mathcal{V}$ aborts.
\end{enumerate}

\end{nfprot}
\caption{Protocol for checking a $k$-normalized weakened resolution proof.}
    \label{prot:k-chain-zkunsat}
\end{figure}

\section{Proof of Security}\label{sec:appendix:sec-proof}

We note that Theorem \ref{thm:k-chain-zk} and Proposition \ref{prop:main-paper} are restated in this section under their original numbering.

Our analysis repeatedly needs to recompute and verify weakened resolvents, so we first record a convenient characterization of them.

\begin{fact}\label{fact:wres}
    Let $V$ be the set of variables in a formula $\varphi$, $\mathrm{Lits}=\{x|\neg x:x\in V\}$ and let $\ell\in\mathrm{Lits}$. A clause $C''$ is a \emph{weakened resolvent} of $C$ and $C'$ on $\ell$ iff
    $$(C\setminus\{\ell\})\cup(C'\setminus\{\neg\ell\})\subseteq C''.$$
    In particular, every superset of a weakened resolvent is again a weakened resolvent. The smallest such $C''$ is $(C\setminus\{\ell\})\cup(C'\setminus\{\neg\ell\})$ itself.
\end{fact}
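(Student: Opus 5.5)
The plan is to unfold the definition of weakened resolution and prove both directions of the equivalence directly, treating clauses as sets of literals. For clauses $A \ni \ell$ and $B \ni \neg\ell$, the resolvent on $\ell$ is then $(A\setminus\{\ell\})\cup(B\setminus\{\neg\ell\})$. By definition, $C''$ is a weakened resolvent of $C$ and $C'$ on $\ell$ iff there are weakenings $C_1' \supseteq C$ and $C_2' \supseteq C'$ with $\ell\in C_1'$, $\neg\ell\in C_2'$, and $C'' = (C_1'\setminus\{\ell\})\cup(C_2'\setminus\{\neg\ell\})$. Throughout, write $S := (C\setminus\{\ell\})\cup(C'\setminus\{\neg\ell\})$.

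For the forward direction ($\Rightarrow$), set difference with a fixed singleton is monotone. Hence $C \subseteq C_1'$ gives $C\setminus\{\ell\} \subseteq C_1'\setminus\{\ell\}$, and likewise $C'\setminus\{\neg\ell\}\subseteq C_2'\setminus\{\neg\ell\}$. Taking unions yields $S \subseteq C''$.

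For the converse ($\Leftarrow$), suppose $S\subseteq C''$. I will construct weakenings whose resolvent is exactly $C''$:
\[
C_1' := C \cup \{\ell\} \cup (C''\setminus\{\ell\}), \qquad C_2' := C' \cup \{\neg\ell\} \cup (C''\cap\{\ell\}).
\]
Both are supersets of $C$ and $C'$ respectively, so they are weakenings, and they contain the required pivot literals.

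The remaining step is computing the resolvent. We have $C_1'\setminus\{\ell\} = (C\setminus\{\ell\})\cup(C''\setminus\{\ell\})$. For $C_2'$, note that $\ell\neq\neg\ell$, so the extra $\ell$ (if present) survives removal of $\neg\ell$. This gives $C_2'\setminus\{\neg\ell\} = (C'\setminus\{\neg\ell\})\cup(C''\cap\{\ell\})$. The union of the two is therefore $S \cup (C''\setminus\{\ell\})\cup(C''\cap\{\ell\}) = S\cup C'' = C''$, using $S\subseteq C''$.

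The "in particular" clauses then follow immediately from the characterization. If $C''$ is a weakened resolvent and $D\supseteq C''$, then $S\subseteq C''\subseteq D$, so $D$ is one as well. Every weakened resolvent contains $S$, and $S$ itself satisfies $S\subseteq S$, so $S$ is the smallest.

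The only delicate point, and the one I expect to need care, is the case $\ell\in C''$. A naive choice of $C_2' = C'\cup\{\neg\ell\}$ would drop $\ell$ from the resolvent in that case. The fix is to place $\ell$ into $C_2'$, which may make $C_2'$ tautological. I would therefore remark that the paper's definition of weakening ($D' = D\lor S$ for an arbitrary clause $S$) permits tautological weakenings. This is the same phenomenon the paper exploits when resolving a clause with itself in Algorithm~\ref{alg:normalize}.
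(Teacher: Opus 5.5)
Your proof is correct. The paper states this characterization as a Fact and gives no proof, so there is no competing argument to compare against. Your two-direction verification fills in exactly the routine check the paper leaves implicit, and your set computations are right: monotonicity of removing a fixed literal handles $\Rightarrow$, and the explicit weakenings $C_1'=C\cup\{\ell\}\cup(C''\setminus\{\ell\})$ and $C_2'=C'\cup\{\neg\ell\}\cup(C''\cap\{\ell\})$ handle $\Leftarrow$.

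Your closing remark should be promoted from a caveat to an explicit dependence of the statement. If $\ell\in C''$, then $\ell\notin C_1'\setminus\{\ell\}$ forces $\ell\in C_2'$, alongside the required $\neg\ell$, for \emph{every} pair of weakenings realizing $C''$. So the ``if'' direction holds only because the paper's notion of a clause, and hence of a weakening, admits complementary literals.

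Your construction also shows that the characterization does not require $\ell\in C$ or $\neg\ell\in C'$. This is what allows Definition~\ref{def:rk} and Lemma~\ref{lem:pivot} to work purely with the subset test.
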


We now make explicit the NP relation that $k$-Chain-\textsc{ZkUnsat} proves knowledge of.

\begin{definition}\label{def:rk}
Fix a chain length $k\ge 1$, and write $\psi=(\varphi,k,|\pi|,d)$ for the common input, with $k$, $|\pi|$, and $d$ encoded in unary, where $\varphi$ is the input formula, $|\pi|$ the number of chains, and $d$ the clause-width bound. A \emph{witness} $\Pi$ for $\psi$ is given as a sequence of clauses $D_0,\dots,D_{|\varphi|+|\pi|-1}$ together with, for each chain $j\in\{1,\dots,|\pi|\}$, premise indices $p_{j,0},\dots,p_{j,k}$ (not necessarily distinct) and pivot literals $\ell_{j,1},\dots,\ell_{j,k}\in\mathrm{Lits}$. Write $C_{j,0}=D_{p_{j,0}}$ and, for $1\le i\le k$, let
\begin{linenomath}
\[ C_{j,i} \;=\; \bigl(C_{j,i-1}\setminus\{\ell_{j,i}\}\bigr)\cup\bigl(D_{p_{j,i}}\setminus\{\neg\ell_{j,i}\}\bigr) \]
\end{linenomath}
be the least weakened resolvent at step $i$ (Fact~\ref{fact:wres}). The witness must satisfy:
\begin{enumerate}
\item[(i)] $D_0,\dots,D_{|\varphi|-1}$ are the clauses of $\varphi$ in the public order fixed by $\psi$;
\item[(ii)] $0\le p_{j,i} < |\varphi|+j-1$ for each $i\in\{0,\dots,k\}$, and the chain result satisfies $C_{j,k}\subseteq D_{|\varphi|+j-1}$;
\item[(iii)] every clause $D_t$ and every intermediate resolvent $C_{j,i}$ ($1\le i\le k-1$) has width at most $d$;
\item[(iv)] $D_{|\varphi| + |\pi| - 1} = \bot$.
\end{enumerate}
Define $R_k = \{(\psi, \Pi) : \Pi \text{ is a witness for } \psi\}$ and $L_{R_k} = \{\psi : \exists\,\Pi\ (\psi,\Pi)\in R_k\}$.
\end{definition}

\begin{lemma}\label{lem:np}
Fix a chain length $k\ge 1$. The relation $R_k$ (Definition~\ref{def:rk}) is an $\mathsf{NP}$ relation.
\end{lemma}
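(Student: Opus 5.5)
To show that $R_k$ is an $\mathsf{NP}$ relation in the sense of Appendix~\ref{sec:appendix:formal-def:PoK}, we check two things. First, every witness $\Pi$ for $\psi$ has length bounded by a polynomial in $|\psi|$. Second, membership $(\psi,\Pi)\in R_k$ is decidable in time polynomial in $|\psi|$. The unary encoding of $k$, $|\pi|$ and $d$ in Definition~\ref{def:rk} is what makes both steps work, since these quantities are then themselves bounded by $|\psi|$.

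\textbf{Length bound.} Let $n=|V|$, which is at most $|\varphi|\le|\psi|$ (we may restrict to variables occurring in $\varphi$, or note that any pivot outside these is harmless but still encodable in $O(\log)$ bits). The witness consists of three parts:
\begin{itemize}
\item $|\varphi|+|\pi|$ clauses $D_t$, each of width at most $d$ by (iii), each literal costing $O(\log n)$ bits;
\item $|\pi|(k+1)$ premise indices, each below $|\varphi|+|\pi|$ and hence $O(\log|\psi|)$ bits;
\item $|\pi|k$ pivot literals.
\end{itemize}
Altogether $|\Pi|=O\bigl((|\varphi|+|\pi|)\,d\log n+|\pi|(k+1)\log|\psi|\bigr)$, which is polynomial in $|\psi|$ because $|\pi|,k,d\le|\psi|$. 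We add the syntactic convention that a string not of this form (e.g.\ with an oversized clause) is not a witness; this is consistent with (iii).

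\textbf{Polynomial-time verification.} The verifier checks the conditions in order.
\begin{enumerate}
\item Check (i) by comparing $D_0,\dots,D_{|\varphi|-1}$ with $\varphi$ literal by literal.
\item Check (iv) by testing whether $D_{|\varphi|+|\pi|-1}$ is empty.
\item For each chain $j$, check the index bounds in (ii). Then compute $C_{j,0},\dots,C_{j,k}$ iteratively from the defining formula, using set difference and union on sorted literal lists, checking the width bound (iii) on each intermediate $C_{j,i}$ as it is computed.
\item Finally check the inclusion $C_{j,k}\subseteq D_{|\varphi|+j-1}$.
\end{enumerate}

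The one subtlety, and the only place care is needed, is step 3: intermediate clauses must not blow up before their width is checked. We handle this by aborting as soon as some $|C_{j,i}|>d$. Then every set operation acts on clauses of width at most $2d+1$, so each costs $\mathrm{poly}(d,\log n)$. Steps 1 and 2 cost $\mathrm{poly}(|\psi|)$. Step 3 costs $|\pi|\cdot k\cdot\mathrm{poly}(d,\log n)$, again polynomial in $|\psi|$ by unarity. Hence $R_k$ is polynomially bounded and polynomial-time decidable, so it is an $\mathsf{NP}$ relation.
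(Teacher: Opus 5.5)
Your proof is correct and follows essentially the same route as the paper's: the unary encoding of $k$, $|\pi|$, $d$ bounds $|\Pi|$ polynomially in $|\psi|$, and a verifier recomputes the $C_{j,i}$ by the recurrence and checks (i)--(iv) in polynomial time. Two small remarks: the early abort on width is harmless but unnecessary, since $|C_{j,i}|\le |C_{j,i-1}|+d$ already gives every intermediate clause width at most $(k+1)d$; and $|V|\le|\varphi|$ should read $|V|\le|\psi|$, because in the paper $|\varphi|$ counts clauses rather than encoding length.
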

\begin{proof}
Given $\psi$ and a candidate witness $\Pi$, the test reads $\Pi$, recomputes each $C_{j,i}$ by the recurrence of Definition~\ref{def:rk}, and verifies conditions (i) to (iv), in $\mathrm{poly}(|\psi|+|\Pi|)$ time. Since $k$, $|\pi|$, and $d$ are encoded in unary (Definition~\ref{def:rk}), the witness comprises $|\varphi|+|\pi|$ clauses of width at most $d$ together with, per chain, $k+1$ indices and $k$ pivot literals, so $|\Pi|=\mathrm{poly}(|\psi|)$ (a width-$w$ clause encodes as a degree-$w$ polynomial~\cite{LAH+22}). Hence $R_k$ is an $\mathsf{NP}$ relation.
\end{proof}

\begin{lemma}\label{lem:adequacy}
Fix a chain length $k\ge 1$. For every CNF formula $\varphi$, $\varphi$ is unsatisfiable if and only if $(\varphi,k,|\pi|,d)\in L_{R_k}$ for some $|\pi|$ and $d$.
\end{lemma}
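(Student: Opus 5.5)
We prove the two directions separately. Soundness ($\Leftarrow$) uses that weakened resolution preserves semantic consequence. Completeness ($\Rightarrow$) uses the completeness of resolution plus the padding idea behind Algorithm~\ref{alg:normalize}.

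\emph{($\Leftarrow$) Soundness.} Suppose $(\psi,\Pi)\in R_k$ with $\psi=(\varphi,k,|\pi|,d)$, and fix any assignment $\sigma$ satisfying $\varphi$. We show by induction on $t$ that $\sigma$ satisfies every $D_t$.
\begin{itemize}
\item For $t<|\varphi|$ this is condition (i).
\item For $t=|\varphi|+j-1$, condition (ii) says all premises $D_{p_{j,i}}$ have smaller index, so $\sigma$ satisfies them by induction.
\item Within the chain, the standard resolution argument applies at each step. If $\sigma$ satisfies $C_{j,i-1}$ and $D_{p_{j,i}}$, then either $\ell_{j,i}$ is false under $\sigma$, so some literal of $C_{j,i-1}\setminus\{\ell_{j,i}\}$ is true, or it is true, so some literal of $D_{p_{j,i}}\setminus\{\neg\ell_{j,i}\}$ is true. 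Either way $\sigma$ satisfies $C_{j,i}$.
\item Hence $\sigma$ satisfies $C_{j,k}$, and therefore its superset $D_t$.
\end{itemize}
Condition (iv) then gives that $\sigma$ satisfies $\bot$, a contradiction. So $\varphi$ is unsatisfiable.

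\emph{($\Rightarrow$) Completeness.} Let $\varphi$ be unsatisfiable. By completeness of (ordinary) resolution~\cite{LAH+22}, there is a resolution refutation $E_1,\dots,E_n=\bot$ in which each $E_s$ is the resolvent of two earlier clauses on some pivot.

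\begin{itemize}
\item \emph{Degenerate case.} If $\bot\in\varphi$, take $n=1$ with $E_1=\bot$ obtained from the single chain whose premises are all that copy of $\bot$. The recurrence then gives $C_{j,i}=\bot$ throughout, for any pivots.
\item \emph{Padding.} Otherwise, turn each binary step $E_s=A\odot_x B$ into one length-$k$ chain, exactly as in line~\ref{padding} of Algorithm~\ref{alg:normalize} with $m=2$. Set $p_{s,0}=\dots=p_{s,k-1}$ to be the index of $A$, choose pivots $\ell_{s,1},\dots,\ell_{s,k-1}$ arbitrarily, and set $p_{s,k}$ to the index of $B$ with pivot $\ell_{s,k}$ the resolved literal.
\item \emph{Key check.} Self-resolution is harmless: for any $\ell$, $(A\setminus\{\ell\})\cup(A\setminus\{\neg\ell\})=A$. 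This holds because $A$, being a resolution clause, is non-tautological, so it cannot contain both $\ell$ and $\neg\ell$. Inductively $C_{s,i}=A$ for $i<k$, and $C_{s,k}$ is exactly the resolvent $E_s$. So (ii) holds with equality.
\item \emph{Parameters.} Let $D_{|\varphi|+s-1}=E_s$ and $|\pi|=n$. Choose $d$ as the maximum width over all input and derived clauses. Since the intermediates equal $A$, this gives (iii), and (iv) holds since $E_n=\bot$.
\end{itemize}

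The main subtle step is the self-resolution claim. It requires the clause to be non-tautological, or else an argument that tautological clauses can be avoided. Any clause containing both $\ell$ and $\neg\ell$ can simply be dropped from a refutation, and input tautologies can be ignored. Alternatively, choose the padding pivots to be a variable not occurring in $A$, provided such a variable exists. Otherwise use a literal $\ell$ with $\ell\in A$ only if $\neg\ell\in A$ too, and this case does not arise for non-tautological $A$.

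The ordering constraint $p_{j,i}<|\varphi|+j-1$ is inherited directly from the refutation's order. This completes the plan.
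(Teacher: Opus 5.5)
Your proof is correct and is essentially the paper's argument made explicit. The paper simply appeals to soundness and completeness of weakened resolution and runs Normalize on an arbitrary chained proof; you instead prove soundness directly by a semantic induction over the $D_t$, and get completeness by padding each binary step of an ordinary refutation into a length-$k$ chain, i.e.\ only the $m=2$ padding branch of Algorithm~\ref{alg:normalize}.

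One correction concerns the step you flag as subtle. The identity $(A\setminus\{\ell\})\cup(A\setminus\{\neg\ell\})=A$ holds for \emph{every} clause $A$ and literal $\ell$, since each literal of $A$ differs from at least one of $\ell,\neg\ell$. So no non-tautology hypothesis is needed. This matters because your claim that resolution clauses are non-tautological is false in general: resolving $x\lor y$ with $\neg x\lor\neg y$ on $x$ yields $y\lor\neg y$. Your final paragraph of workarounds can therefore be dropped.
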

\begin{proof}
This follows from the soundness and completeness of weakened resolution~\cite{LAH+22} together with the fact that \textsc{Normalize} (Algorithm~\ref{alg:normalize}) produces a chained proof with chains of length exactly $k$.
\end{proof}

Establishing zero-knowledge requires isolating exactly what each functionality call exposes, so we separate every prover invocation into a public and a secret component, and prove that the order of invocation of functionalities if fixed for a fixed $\psi$.

\begin{definition}[Public and secret components]\label{def:components}
  Consider a single invocation the honest prover sends to a hybrid functionality $\mathcal{F}\in\{\Func[ZK],\Func[Clause],\Func[FlexZKArray]\}$. Its input splits into two components. A \emph{public
  component}: the invoked operation together with its public arguments, and a \emph{secret component}: the private inputs of the prover for the invoked operation.
\end{definition}

For example, in a $\textsf{Read}$ to $\Func[FlexZKArray]$ the prover sends $(\textsf{Read},\ell,d,t)$. The public component is the operation $\textsf{Read}$ together with the bound $t$, and the secret component is the index $\ell$ and the retrieved value $d$.

\begin{definition}[Public invocation pattern]\label{def:backbone}
  Fix a common input $\psi=(\varphi,k,|\pi|,d)$ and a witness $\Pi$ of $\psi$. The honest prover's control flow is determined by $(\psi,\Pi)$, so its invocations to the hybrid
  functionalities occur in a fixed order; let there be $N$ of them. The \emph{public invocation pattern} is
  \[
    B(\psi,\Pi)=\big(\,(\mathcal{F}_i,\mathsf{pub}_i)\,\big)_{i=1}^{N},
  \]
  where $\mathcal{F}_i$ is the target functionality of the $i$th invocation and $\mathsf{pub}_i$ its public component (Definition~\ref{def:components}).
\end{definition}

\begin{lemma}\label{lem:backbone}
Fix a common input $\psi$. The public invocation pattern $B(\psi,\Pi)$ is identical for every witness $\Pi$ of $\psi$.
\end{lemma}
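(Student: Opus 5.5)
We prove the statement by walking through the $k$-Chain-\textsc{ZkUnsat} protocol (Figure~\ref{prot:k-chain-zkunsat}) step by step. At each step we record the sequence of honest-prover invocations and show that their number, order, target functionalities and public components are all computable from $\psi=(\varphi,k,|\pi|,d)$ alone. The honest prover's control flow contains no branching on $\Pi$: every loop bound is drawn from $\psi$. Concretely, we build an explicit function $\beta(\psi)$ listing the pattern and show $B(\psi,\Pi)=\beta(\psi)$ for every witness $\Pi$.

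\textbf{Commit phase.} Step~1 issues, for each $i\in[0,|\varphi|-1]$, the $d+1$ \textsf{Instance} calls to $\Func[ZK]$ (via $\Func[Clause]$) on the coefficients of $\phi(C_i)$. These coefficients are determined by $\varphi$ and $d$, so they are public by definition. Step~2 issues, for each $i\in[1,|\pi|]$, an \textsf{Input}/\textsf{Witness} commitment with width bound $d$. Its public component is just the operation name and $d$, while the literals or coefficients form the secret component. The count is $|\pi|$. Step~3 issues a single \textsf{Init} call whose public component is $N=|\varphi|+|\pi|$, together with the handles of the previously produced commitments. These handles are positions in the invocation sequence and therefore depend only on the preceding pattern.

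\textbf{Proving phase.} For each iteration $i\in[1,|\pi|]$ we list the calls in order.
\begin{itemize}
\item Two \textsf{Read}s for $j=1$ and one \textsf{Read} for each $j\in\{2,\dots,k\}$. Each has public bound $t=|\varphi|-1+i$.
\item One final \textsf{Read} fetching $[C_{|\varphi|-1+i}]$, also with public bound $t=|\varphi|-1+i$.
\item $k-1$ \textsf{Witness} commitments of intermediate clauses, each with width bound $d$.
\item $k$ \textsf{Xres} calls whose arguments are commitment handles. Their positions are fixed by the iteration structure.
\end{itemize}
The indices $h_{ij}$, the retrieved clauses and the pivots all lie in secret components. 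Hence each iteration contributes a block of $(k+2)+(k-1)+k$ invocations whose public parts depend only on $i$, $|\varphi|$, $k$ and $d$. The phase ends with one \textsf{IsFalse} on the handle of $[C_{|\varphi|-1+|\pi|}]$, followed by one \textsf{check} to $\Func[FlexZKArray]$. Concatenating these blocks gives $\beta(\psi)$ explicitly, with
\[
N=|\varphi|(d+1)+|\pi|+1+|\pi|(3k+1)+2 .
\]

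\textbf{Main obstacle.} The delicate step is confirming that nothing we classified as public secretly depends on $\Pi$. This concerns two things in particular. The first is the arguments of \textsf{Xres}, \textsf{Equal} and \textsf{IsFalse}, which are commitment handles rather than values. We treat a handle as a label fixed by its position in the invocation sequence. Since the handles of $[C_{h_{ij}}]$ are fresh outputs of the preceding \textsf{Read}, they do not reveal $h_{ij}$. The second is padding in Normalize, which could conceivably alter control flow. It does not: by Definition~\ref{def:rk}, every witness already has exactly $k+1$ premise indices per chain, repetitions allowed. We also rely on condition~(iii) of Definition~\ref{def:rk}, which guarantees that every committed clause fits the public width $d$, so no call ever depends on an actual clause width. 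With these points checked, $B(\psi,\Pi)=\beta(\psi)$ for all $\Pi$, as claimed.
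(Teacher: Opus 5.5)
Your proposal is correct and takes essentially the same route as the paper, which likewise walks through the protocol and observes that every public argument is a function of $\psi$ and the public chain index alone: the \textsf{Read} bounds $|\varphi|+j-1$, the width $d$, the fixed number of \textsf{Read} and \textsf{X-Res} calls per chain, and the closing \textsf{IsFalse} and \textsf{check}. Yours is simply more explicit. You note that \textsf{Read} returns fresh handles, you give an explicit count $N$ (whose exact value depends on how finely you count commitment calls), and you ground the fixed $k+1$ premises per chain in Definition~\ref{def:rk} rather than in \textsc{Normalize}, which is the more accurate justification since the lemma quantifies over all witnesses.
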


\begin{proof}
A $\textsf{Read}$ takes a public bound and returns a fresh commitment to the value at a prover-chosen secret index $\ell$~\cite[Fig.~5]{LAH+22}. At $\textsf{Init}$ the $|\varphi|$ public clauses of $\varphi$ are fixed as the instance in cells $0,\dots,|\varphi|-1$ of $\Func[FlexZKArray]$~\cite{LAH+22}. Because \textsc{Normalize} makes every chain length exactly $k$, chain $j$ issues $k+1$ premise $\textsf{Read}$s with public bound $|\varphi|+j-1$, makes $k$ $\textsf{X\text{-}Res}$ calls committing each step resolvent through $\Func[Clause]$ with public width $d$, and appends its result to the cell at public index $|\varphi|+j-1$; the run closes with one $\textsf{IsFalse}$ and one $\textsf{check}$. Every public argument is thereby a function of $\psi$ and the public chain index $j\in\{1,\dots,|\pi|\}$, hence of $\psi$ alone.
\end{proof}

\begin{lemma}[Pivot recovery]\label{lem:pivot}
    Let  $(\textsf{X\text{-}Res}, [C],[C'],[C''])$ be a non-aborting call to \Func[Clause]. A pivot literal $\ell\in\mathrm{Lits}$ witnessing this, that is, with $(C\setminus\{\ell\})\cup(C'\setminus\{\neg\ell\})\subseteq C''$, is computable from $C,C',C''$ in time polynomial in $|\psi|$.
\end{lemma}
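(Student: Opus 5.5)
The plan is to search for the pivot by brute force over the literals that can possibly serve as one, and to check each candidate against the characterization in Fact~\ref{fact:wres}. Since the call $(\textsf{X\text{-}Res}, [C],[C'],[C''])$ did not abort, $\{C,C'\}\vdash_{\text{X-Res}} C''$ holds. By Fact~\ref{fact:wres}, this means there exists some $\ell\in\mathrm{Lits}$ with $(C\setminus\{\ell\})\cup(C'\setminus\{\neg\ell\})\subseteq C''$. Existence is therefore given, and only the efficiency of finding such an $\ell$ needs an argument.

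First, the candidate set can be restricted to $C\cup\neg C'\cup\{x_0\}$, where $x_0$ is an arbitrary fixed literal. Suppose a witnessing $\ell$ lies outside $C\cup\neg C'$. Then $\ell\notin C$ and $\neg\ell\notin C'$, so the condition reads $C\cup C'\subseteq C''$. In that case every literal is a witness, $x_0$ included. So if any witness exists, one exists among the at most $|C|+|C'|+1\le 2d+1$ candidates.

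For each candidate $\ell$, we compute $(C\setminus\{\ell\})\cup(C'\setminus\{\neg\ell\})$ and test whether it is contained in $C''$. Each test takes $O(d^2)$ time, or $O(d\log d)$ after sorting. The total cost is $O(d^3)$, which is polynomial in $|\psi|$ because $d$ is encoded in unary in $\psi$ (Definition~\ref{def:rk}). The clauses $C,C',C''$ are available in the clear: as in the extractor or simulator setting of \cite{LAH+22-ext}, the machine running this procedure reads them from the hybrid functionality's stored values, since they are the stored contents of $\Func[Clause]$.

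No step here is hard. The one point to state carefully is the degenerate case above, where $C\cup C'\subseteq C''$ makes the pivot arbitrary. The candidate restriction is what rules out searching over all $2|V|$ literals, although even that full search would still be polynomial in $|\varphi|\le|\psi|$. I would also note that when several pivots qualify, any one of them is acceptable.
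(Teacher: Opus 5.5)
Your proof is correct and follows the paper's approach: a pivot exists because the call did not abort, and you find one by brute force, testing the subset condition of Fact~\ref{fact:wres} for each candidate literal. The one difference is a refinement that the paper skips. You restrict the candidates to $C\cup\neg C'\cup\{x_0\}$, which you correctly justify by the degenerate case $C\cup C'\subseteq C''$. This bounds the work by $O(d^3)$ rather than the paper's $O(|V|)$ subset tests over all $2|V|$ literals, and both bounds are polynomial in $|\psi|$.
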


\begin{proof}
The call was certified on some supplied pivot literal, so a valid $\ell$ exists. By Fact~\ref{fact:wres}, testing $(C\setminus\{\ell\})\cup(C'\setminus\{\neg\ell\})\subseteq C''$ for each of the $2|V|$ literals $\ell$ and returning one that passes recovers a valid pivot literal in $O(|V|)$ subset tests on width-$\le d$ clauses.
\end{proof}

With these components in place, we can state and prove the main security guarantee.

\begin{reptheorem}{thm:k-chain-zk}
Let $R_k$ be the relation in Definition~\ref{def:rk}. Against static corruption, $k$-Chain-\textsc{ZkUnsat} perfectly UC-realizes~\cite{Canetti20} the zero-knowledge proof-of-knowledge functionality $\mathcal{F}^{R_k}_{\mathsf{ZKPoK}}$~\cite{HL10} in \textsc{ZkUnsat}'s $(\Func[ZK],\Func[Clause],\Func[FlexZKArray])$-hybrid model.
\end{reptheorem}

\begin{proof}
Fix an arbitrary environment $\mathcal{Z}$ and, without loss of generality, let the real
adversary be the dummy adversary. For each static corruption pattern we describe a
simulator $\mathcal{S}$ such that the hybrid and ideal execution ensembles are identically
distributed, and hence $\mathcal{Z}$'s views coincide. The both-honest and both-corrupt
patterns are immediate, so it suffices to treat the two interesting cases.\\\\
\textbf{Malicious verifier (zero-knowledge):} The functionality
$\mathcal{F}^{R_k}_{\mathsf{ZKPoK}}$ leaks only the common input
$\psi=(\varphi,k,|\pi|,d)$. By Lemma~\ref{lem:backbone} the public invocation pattern
depends on $\psi$ alone, that is $\mathcal{B}(\psi,\Pi)=\mathcal{B}(\psi)$, so from $\psi$
alone $\mathcal{S}$ reproduces every functionality call, answering each with an opaque
handle and the non-abort verdict that an honest prover with $\Pi\in R_k$ induces. The
handles hide the committed values, so the joint view of $\mathcal{Z}$ and the corrupt
verifier is identical in the two worlds.\\\\
\textbf{Malicious prover (knowledge soundness):} The simulator $\mathcal{S}$ internally
emulates $\mathcal{F}_{\mathsf{ZK}}$, $\mathcal{F}_{\mathsf{Clause}}$, and
$\mathcal{F}_{\mathsf{FlexZKArray}}$ honestly, running their code exactly as specified, and
thereby observes the committed clauses $D_t$, the secret $\mathsf{Read}$ indices $p_{j,i}$,
and a pivot literal $\ell_{j,i}$ for each $\mathsf{X\text{-}Res}$ call, where the pivot is
recovered in polynomial time by Lemma~\ref{lem:pivot}. The non-abort conditions of the
functionalities enforce exactly the four requirements of Definition~\ref{def:rk}:
$\mathsf{Instance}$ enforces (i), the width bound of $\mathsf{Input}$ enforces (iii), the
index bound of $\mathsf{Read}$ together with the $\mathsf{X\text{-}Res}$ checks enforce
(ii), and $\mathsf{IsFalse}$ enforces (iv). Hence the run is non-aborting if and only if
$(\psi,\Pi)\in R_k$, so $\mathcal{S}$ submits a valid witness to
$\mathcal{F}^{R_k}_{\mathsf{ZKPoK}}$ in exactly those runs where the hybrid-world verifier
would accept. Lemma~\ref{lem:np} guarantees that this membership test is well defined, and
Lemma~\ref{lem:adequacy} gives completeness.
\end{proof}

\begin{repproposition}{prop:main-paper}
    Given a parent proof, if the k-Chain-\textsc{ZkUnsat} protocol is executed for all $k \in \{1, \dots, p\}$ where $p$ is the total number of resolution steps in the parent proof, and the k-Chain weakened resolution proofs are derived using the Normalize algorithm, then the collection of all executions reveals the number of chains of length $l$ for every $1 \le l \le p$.
\end{repproposition}
\begin{proof}
    Let $P$ be an arbitrary parent proof, and let $P_k$ for $k \in \{1, \dots, p\}$ be the $k$-Chain weakened resolution proofs derived by the Normalize algorithm. Let there be $m$ chains in $P$, and $L_i$ be the length of the $i^{th}$ chain in $P$.
    
    An adversary who knows the public parameters -- $\left(\psi, k, \sum_i \left\lceil \frac{L_i}{k}\right\rceil, d\right)$ -- for all $k \in \{1, \dots, p\}$ learns $\sum_i \left\lceil \frac{L_i}{k}\right\rceil$ for each $k$. Define $\vec{l} = (l_1, \dots,l_p) \in \mathbb{Z}_{+}^p$ where $l_i = \sum_j \left\lceil \frac{L_j}{i}\right\rceil$. Since $p$ is the total number of resolution steps as revealed by the public parameters in the $k=1$ execution, the adversary learns $l_p=m$. Now, the adversary can define $\vec{t_1} = (t_1^{(1)}=l_1-m, \dots, t_1^{(p)}=l_p-m)$ and infer that the last non-zero entry $t_{1}^{(s-1)}$ denotes the number of chains in $P$ of length $s$. The adversary can now infer the number of lines in each execution that were added as a result of $P$ containing $t_1^{(s-1)}$ chains of length $s$. Let $\vec{e_1} = (e_1^{(1)}, \dots, e_1^{(p)})$ denote the expected addition because of $P$ containing $t_1^{(s-1)}$ chains of length $s$. Define $\vec{t_2} = \vec{t_1} - \vec{e_1}$ and proceed similarly to deduce the number of chains of length $l$ for every $1 \le l \le p$.
\end{proof}

\end{document}